\documentclass[prl,reprint,amsmath,amssymb,aps,floatfix,longbibliography,superscriptaddress]{revtex4-2}
\usepackage[utf8]{inputenc}
\usepackage{graphicx}
\usepackage{amsfonts}
\usepackage{hyperref}
\usepackage{amsmath}
\usepackage{amsthm}


\newtheorem{theo}{Tplottin ubuntuheorem}
\theoremstyle{plain}
\newtheorem{thm}[theo]{Theorem}
\newtheorem{lem}[theo]{Lemma}
\newtheorem{prop}[theo]{Proposition}

\newtheorem*{thm*}{Theorem}
\newtheorem*{lem*}{Lemma}
\newtheorem*{prop*}{Proposition}
\newtheorem*{cor*}{Corollary}

\theoremstyle{definition}
\newtheorem{defn}[theo]{Definition}

\newtheorem{rem}[theo]{Remark}

\begin{document}
\title{Topologically Protected Vortex Knots in an Experimentally Realizable System}

\author{Hermanni Rajamäki}
\affiliation{QCD Labs, QTF Centre of Excellence, Department of Applied Physics, Aalto University, P.O. Box 13500, FI-00076 Aalto, Finland}

\author{Toni Annala}
\email{tannala@ias.edu}
\affiliation{QCD Labs, QTF Centre of Excellence, Department of Applied Physics, Aalto University, P.O. Box 13500, FI-00076 Aalto, Finland}
\affiliation{School of Mathematics, Institute for Advanced Study, 1 Einstein Drive, Princeton, 08540, NJ, USA}

\author{Mikko Möttönen}
\affiliation{QCD Labs, QTF Centre of Excellence, Department of Applied Physics, Aalto University, P.O. Box 13500, FI-00076 Aalto, Finland}


\date{\today}

\begin{abstract}
Ordered media often support vortex structures with intriguing topological properties. Here, we investigate non-Abelian vortices in tetrahedral order, which appear in the cyclic phase of spin-2 Bose--Einstein condensates and in the tetrahedratic phase of bent-core nematic liquid crystals. Using these vortices, we construct topologically protected knots in the sense that they cannot decay into unlinked simple loop defects through vortex crossings and reconnections without destroying the phase. The discovered structures are the first examples of knots bearing such topological protection in a known experimentally realizable system.
\end{abstract}

\maketitle

\setcounter{section}{0}

Topological phenomena in physics have attracted persistent attention since Lord Kelvin's vortex-atom theory in 1869~\cite{thomson:1869}, which hypothesized atoms to be knotted vortices in the ether that permeates throughout the cosmos. Since then, interesting topological structures have been discovered in various other contexts: the topologically non-trivial gauge field theory of the Dirac monopole~\cite{dirac:1931}, topological defects~\cite{volovik:1977, mermin:1979} that appear in various ordered systems such as superfluids~\cite{volovik:2009}, Bose--Einstein condensates (BEC)~\cite{kawaguchi:2012,ueda:2014}, liquid crystals~\cite{machon:2013,beller:2014,machon:2016b,machon:2019}, and the vacuum structures of quantum field theories~\cite{bucher:1992, bucher:1992b, bucher:1994}, as well as topological phases of matter such as topological superconductors~\cite{sato:2017} and topological insulators~\cite{hasan:2010}. Interestingly, such phenomena are amenable to investigation by numerical~\cite{faddeev:1997,ruostekoski:2003}, and experimental means~\cite{ray:2014,ray:2015,hall:2016, lee:2018}, resulting in a discipline in which practical experiments and numerical simulations meet methods of abstract topology.

Lord Kelvin's hypothesis on atoms was inspired by the mathematical observation that, in a frictionless ideal fluid, vortex cores cannot pass through each other~\cite{thomson:1869}, resulting in a conservation of the knotting type, and a related conserved quantity \emph{helicity}~\cite{moffatt:1969}. Thus, knots tied from vortices in such a system behave equivalently to physical knots. In contrast, in ordinary fluids and superfluids, the knot structure of a vortex configuration tends to decay because of allowed strand crossings and local reconnections~\cite{kleckner:2013,kleckner:2014,kleckner:2016}. 

However, \textit{topological vortices} in ordered media provide a setting with intermediate stability properties since there exist topological obstructions for strand crossings~\cite{poenaru:1977} and local reconnections~\cite{annala:2022}. Moreover, even though the local structure of topological defects is well-understood in terms of the \emph{homotopy groups} of the \emph{order parameter space}~\cite{mermin:1979}, as is the global structure of ring defects and unlinked configurations thereof~\cite{nakanishi:1988, annala-mottonen}, the global structure of topological vortex knots and links is more subtle~\cite{machon:2014, machon:2016, annala:2022}. A heuristic explanation for this challenge is provided by the phenomenon of \emph{topological influence}, namely, that the topological charge of a topological defect may be altered as it traverses about the core of a topological vortex~\cite{kobayashi:2012, kobayashi:2014, annala-mottonen}. A topological vortex knot traverses about itself in a non-trivial fashion, resulting in a potentially intricate topological structure.

The topological protection of those topological vortex configurations in the biaxial-nematic and cyclic phases of spin-2 BECs that have no scalar-phase winding about the core, as well as all vortex configurations in the biaxial-nematic liquid crystal, were investigated in Ref.~\cite{annala:2022}. Unfortunately, such vortices do not support any topologically protected knots since a link of at least three components is necessary to achieve topological protection and such configurations reduce into simpler structures by topologically allowed strand crossings and reconnections. This requirement of three components is related to the structure of the \emph{quaternion group} 
\begin{equation}
    Q_8 := \{\pm 1, \pm i, \pm j, \pm k \},
\end{equation}
which controls the behavior of the particular class of vortex structures, there is no general mathematical obstruction for the existence of topological vortex knots. In fact, topological vortices controlled by the \emph{symmetric group} $\Sigma_3$ on three elements exhibit at least two types of topological vortex knots: the left- and the right-handed tricolored trefoil knot~\cite{annala2023}. Unfortunately, no experimentally achievable physical systems which support such vortices has been found.

In this paper, we discover the first topologically protected knot structure in an experimentally realizable order parameter field, namely, in the cyclic phase of a spin-2 BEC. The behavior of such topological vortices is essentially controlled by the \emph{binary tetrahedral group}
\begin{equation}
    T^* := \left\{\pm 1, \pm i, \pm j, \pm k, \tfrac{\pm 1 \pm i \pm j \pm k}{2} \right\},
\end{equation}
which contains 24 elements and the quaternion group $Q_8$ as a subgroup. In physical terms, the subgroup $Q_8$ corresponds to the subclass of topological vortices in the cyclic phase of the spin-2 BEC, whereas the group $T^*$ describes the behavior of also those vortices with fractional scalar-phase winding~\cite{semenoff:2007, turner:2007}. Interestingly, the binary tetrahedral group also describes topological vortices in the tetrahedratic phase of bent-core nematic liquid crystals~\cite{lubensky:2002}.

Next, we recall the formalism of topological vortices. Let $M$ denote the order parameter space of an ordered medium and $X \subset \mathbb{R}^3$ describe the physical extent of the system. A \emph{topological vortex configuration} consists of a \emph{core} $\mathcal{L} \subset X$, and a continuous map 
\begin{equation}
    \Psi: X \backslash \mathcal{L} \to M
\end{equation}
referred to as the \emph{order parameter field}. The core of the defect $\mathcal{L} \subset X$ is assumed to consist of smoothly embedded circles. In other words, $\mathcal{L}$ is a \emph{tame link}~\cite{rolfsen:2003}. More general one-dimensional core structures, such as embedded graphs, are not considered here for simplicity. 

As in Ref.~\cite{annala:2022}, we consider the evolution of the vortex configuration with three types of processes: (i) topology-preserving evolution where the defect-core evolves up to smooth isotopy, and the order parameter field evolves up to homotopy, (ii) \emph{topologically allowed strand crossing}~\cite{poenaru:1977} where two cores may cross if they correspond to commuting elements in the \emph{fundamental group} $\pi_1(M)$ of the order parameter space, and (iii) \emph{topologically allowed reconnection}~\cite{annala:2022} where two vortex cores may reconnect locally if there is no topological obstruction for it. Collectively, the two topology-altering processes are referred to as \emph{topologically allowed modifications}. A topological vortex structure that cannot evolve into a collection of unlinked simple loops via the processes explained above is referred to as being \emph{topologically protected}.

However, the following three types of processes are not considered in our definition of topological protection: (i) non-commuting elements crossing, and forming a rung vortex between the crossing points~\cite{poenaru:1977}, (ii) loops vanishing by shrinking into a point, and (iii) simple vortices splitting along a significant distance. By simple vortices, we mean non-multiply charged vortices in the sense that the topological charge cannot be expressed as a product of simpler elements of the fundamental group. For example, if the fundamental group is $\mathbb{Z}$, then vortices corresponding to $\pm 1$ are simple, and for the groups $Q_8$ and $T^*$, we regard elements that are different from $\pm 1$ as being simple. We physically justify the omission of such events by focusing on low-energy states in low-temperature systems, where such processes are energetically unfavorable. To this end, we also note the following: (i) the energy required to for a rung vortex is linearly proportional to its length in general rendering the formation of a vortex cord long enough to affect the evolution of the structure unlikely, (ii) in certain BEC systems, ring vortices are protected by an energy barrier against shrinking into a point defect~\cite{ruostekoski:2003}, and (iii) the splitting of vortices is energetically unfavourable in some biaxial-nematic liquid crystal systems, and is not observed in simulations~\cite{priezjev:2002}.

Henceforward, we assume that our physical system is a finite-sized spherical sample of an ordered medium, and that none of the topological vortices pierce the boundary of the sample. In mathematical terms, the physical extent of the system $X \subset \mathbb{R}^3$ is modelled by the closed unit ball
\begin{equation}
\bar{\mathbb{B}}^3 := \{\mathbf{x} \in \mathbb{R}^3 : \vert\vert \mathbf{x} \vert\vert \leq 1\} \subset \mathbb{R}^3,   
\end{equation}
and the core link $\mathcal{L}$ is contained in the interior of $\bar{\mathbb{B}}^3$. 
Under the assumption that the \textit{second homotopy group} $\pi_2(M)$ of the order parameter space $M$ is trivial, a topological vortex configuration may be faithfully represented by a \textit{$\pi_1(M)$-colored link} (Fig.~\ref{fig:1}), which is a combinatorial object that captures both the topology of the core configuration $\mathcal{L}$, and the homotopy class of the order parameter field $\Psi: X \backslash \mathcal{L} \to M$~\cite{annala:2022}. 

The order parameter space for the cyclic phase of spin-2 BEC is 
\begin{equation}
    M_\mathrm{C} = [S^1 \times \mathrm{SO}(3)] / T,
\end{equation}
where the circle $S^1$ is regarded as the unit complex numbers, and $T$ is the group of the rotational symmetries of a tetrahedron~\cite{semenoff:2007, turner:2007}. Moreover, the fundamental group of $M_\mathrm{C}$ may be realized as a subgroup of $\mathbb{Z} \times T^*$ where the first component records the fractional scalar-phase winding in multiples of $1/3$. Moreover, for every element of $T^*$ there exists such a choice for the first parameter that the pair is contained in $\pi_1(M_\mathrm{C})$, and the choice is unique modulo 3. Thus, any $T^*$-colored link may be constructed as a topological vortex configuration in spin-2 cyclic BEC. Importantly, the theory of $T^*$-colored links may be employed to investigate the existence of topologically protected vortex structures in spin-2 cyclic BEC.

Next, we investigate the topological properties of $T^*$-colored links. The $T^*$-colored link diagrams may be graphically presented in terms of bicolored links, as explained in Fig.~\ref{fig:TColors}. There are essentially three classes of bicolorings: those corresponding to the elements $\{\pm i, \pm j, \pm k\}$ which we refer to as $Q_8$ \emph{strands}, those corresponding to elements of the form $(1 \pm i \pm j \pm k)/2$ which we refer to as \textit{black strands}, and those corresponding to elements of the form $(-1 \pm i \pm j \pm k)/2$ which we refer to as \emph{gray strands}. The crossing rules in $T^*$-colored link diagrams shown in Fig.~\ref{fig:TCrossing} can be understood in terms of rotations of a decorated tetrahedron (see the Supplemental Material). Topologically allowed modifications admit a graphical presentation as well, as explained in Fig.~\ref{fig:modificationrules}.

In order to verify the topological protection of a vortex configuration, we construct invariants of $T^*$-colored links that are conserved in topologically allowed modifications, and for which a configuration of unlinked simple loops obtains a trivial value. Hence, a non-trivial value of the invariant implies topological protection.

We introduce two invariants, $Q_b$ and $Q_g$, which are computed using the black strands, and the gray strands, respectively. These invariants are similar to the $Q$-invariant defined in Ref.~\cite{annala:2022}.

The invariant $Q_g \in \mathbb{Z}_6$ for gray loops is defined by first interchanging the corresponding black and gray bicolorings, and then computing the invariant $Q_b \in \mathbb{Z}_6$ for black loops, the definition of which we explain next. Each black and gray loop has an orientation such that the black or the gray color of the bicoloring is on the right side when starting from the basepoint and traversing along the loop according to the orientation. For each black loop $L$ of the diagram, we choose a basepoint $b$ on one of the arcs of $L$ in the diagram, and define $q(L,b) \in T^*$ as the product of elements one obtains when $L$ crosses under other strands and itself, in the order the crossings occur along $L$, from right to left. The element $q(L,b)$ is a power of the element $\chi(b)$ corresponding to the coloring at the basepoint,
\begin{equation}
    q(L,b) = \chi(b)^{Q'(L)},
\end{equation}
and the exponent $Q'(L) \in \mathbb{Z}_6$ is independent of the choice of the basepoint, and is well-defined modulo 6 (see the Supplemental Material). The black invariant is then defined as
\begin{equation}
    Q_b := \sum_{L \text{ is a black loop}} Q'(L) - \omega_b - 4l_{bg} \in \mathbb{Z}_6,
\end{equation}
where $\omega_b$ is the signed count of crossings where both strands are black, and $l_{bg}$ is the signed count of crossings where a black strand crosses under a gray strand (Fig.~\ref{fig:Q'def}). The sign of each crossing is decided by the right-hand rule (see the Supplemental Material). 

The invariants $Q_b$ and $Q_g$ are conserved in topologically allowed local modifications (see the Supplemental Material). Clearly, these invariants vanish for a configuration of unlinked simple loops. A non-zero value of either invariant therefore implies a topologically protected structure: the link cannot decay into unlinked simple loops by the processes described above. However, we suspect that the invariants are not complete in the sense that structures with identical invariants may not necessarily be topologically equivalent as exemplified in Fig.~\ref{fig:examples}.

The topological invariants defined above demonstrate that topologically protected knots and links exist in the cyclic phase of spin-2 BEC. Two $T^*$-colored knot structures, the trefoil and the figure-eight knot, illustrated in Fig.~\ref{fig:examples}(a), have invariant $Q_b=[3]$. Equivalent knots can be constructed using grey loops, with $Q_g = [3]$. Fig.~\ref{fig:examples}(b) demonstrates the existence of a topologically protected link consisting of one black and one gray loop. The $T^*$-colored trefoil and figure-eight knots differ by a black Solomon's link structure which is a $T^*$-colored link consisting of two black loops. In the Supplemental Material, we show that every $T^*$-colored link structure consisting of only black loops decomposes into unlinked configurations of $T^*$-colored trefoil and figure-eight knots. Hence, the knots displayed in Fig.~\ref{fig:examples}(a) can be regarded as building blocks of purely black $T^*$-colored link structures consisting of only black strands. By symmetry, an equivalent statement is true for purely gray $T^*$-colored link structures. No such decomposition theorems are known for structures containing different types of loops.

In conclusion, we studied topologically intriguing vortex structures in the cyclic phase of spin-2 BEC, by employing the mathematical theory of $T^*$-colored links. This formalism also applies to bent-core nematic liquid crystals. We produced the first examples of topologically protected vortex knots supported by an ordered medium realizable in contemporary experiments.

Our results leave several interesting questions open, such as the detailed method to realize the proposed vortex structures in practice using experimentally available control techniques. This is a challenging task since topologically protected knots cannot, by definition, be manufactured by performing local surgeries to an unknotted loop defect, which is a method that has been applied successfully in the case of colloidal particles~\cite{tkalec:2011}. A rapid phase transition may generate a random vortex structure~\cite{sec:2014}, but often a vortex network rather than a vortex link is created~\cite{priezjev:2002}. Alternatively, controlled tailoring of electromagnetic or other fields that would induce the desired structure in the condensate, in the spirit of of~\cite{ray:2015, ollikainen:2017, pietila:2009b}, could be pursued. On the theoretical side, the complete classification $T^*$-colored links up to topological modifications remains open.

We thank T. Machon for pointing out that the tetrahedral order arises in bent-core nematic liquid crystals. We have received funding from the Academy of Finland Centre of Excellence program (Project No. 336810). T.A. thanks IAS for excellent working conditions.  

\bibliography{bibliography}

\newpage

\begin{figure}[h!]
    \centering
    \includegraphics[scale=1]{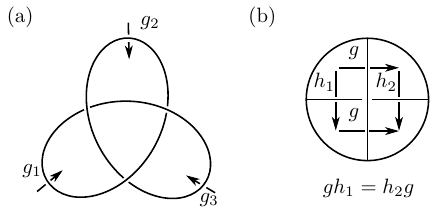}
    \caption{
    \textbf{$G$-colored link diagrams.}
    (a) Group homomorphism $\phi:\pi_1(\mathbb{R}^3$\textbackslash$L)\rightarrow G$ can be presented by a $G$-colored link diagram, in which each arc is decorated with an arrow and an element $g$ of $G$. The element records the value of $\psi[\gamma]$, where $[\gamma]\in \pi_1(\mathbb{R}^3\backslash L)$ is the homotopy class of a loop $\gamma$, which begins from a fixed basepoint above the plane of the picture, travels directly to the base of the arrow, traverses below the arc along the arrow, and returns back to basepoint. Loops which can be continuously deformed into each other, while keeping the basepoint fixed, belong to the same homotopy class. 
    (b) For the group homomorphism to be well-defined, the \textit{Wirtinger relation} must be satisfied at each crossing.}
    \label{fig:1}
\end{figure}

\begin{figure}[h!]
    \centering
    \includegraphics[scale=0.7]{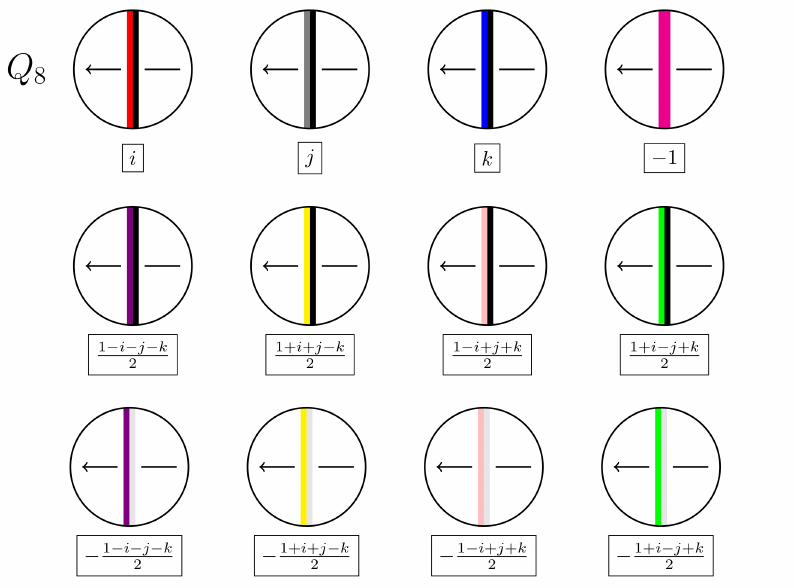}
    \caption{
    \textbf{$\mathbf{T^*}$-colored link diagrams: coloring scheme.}
    The cores of the topological vortices are represented by bicolored arcs, except the element $\{-1\}$, which is represented by a singly-colored arc. Crossing under the strand in the opposite direction corresponds to the inverse element in $T^*$, which coincides with the conjugate quaternion. The elements of the top and the middle row form conjugacy classes of four elements each. The elements of the bottom row, as well as their inverses, form the normal subgroup $Q_8 \subset T^*$. 
    }
    \label{fig:TColors}
\end{figure}

\begin{figure}[h!]
    \centering
    \includegraphics[scale=0.4]{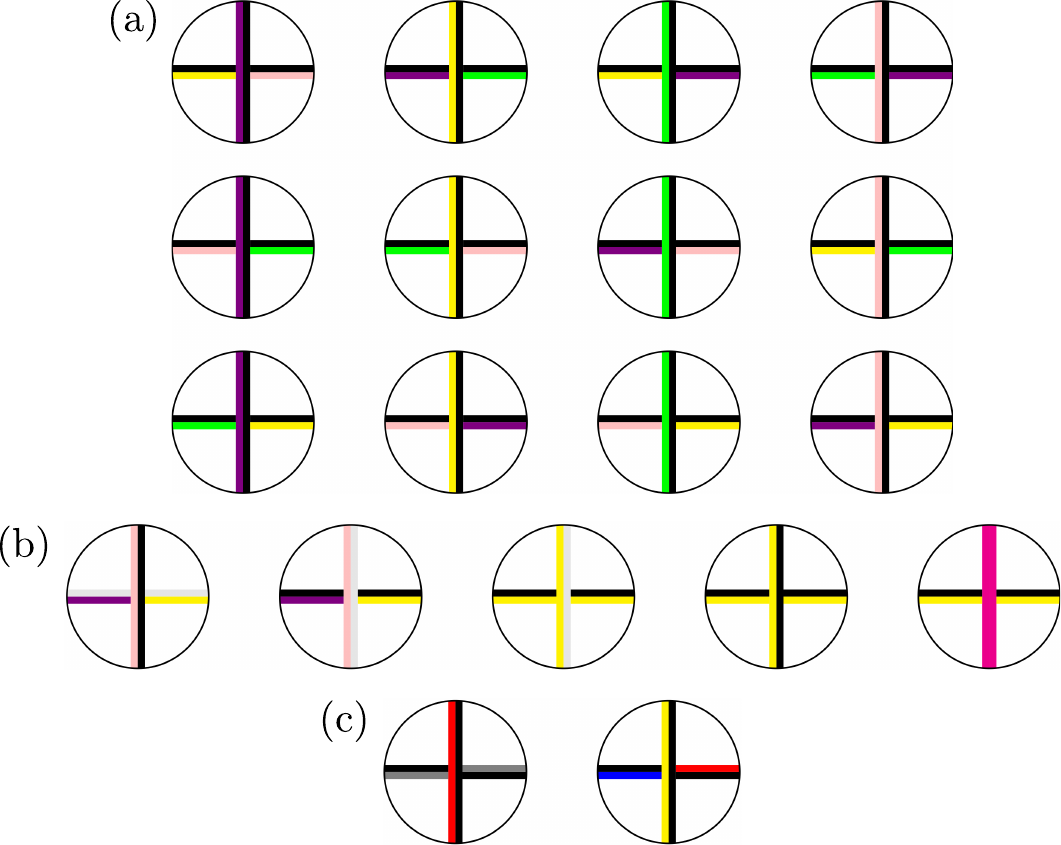}
    \caption{
    \textbf{$\mathbf{T^*}$-colored link diagrams: crossing rules according to the Wirtinger relation.}
    (a)~Complete crossing rules between the black strands. The color changes in the undercrossing, but the direction of the bicoloring remains unchanged. 
    (b) Passing under a strand with an identical colored part of the bicoloring, or a purple strand, does not change the bicoloring. A bicoloring with grey colored part behaves identically to the corresponding black bicoloring in crossings, because the group elements differ by the multiplication by the element $-1$ which commutes with every element of $T^*$. 
    (c) In a $Q_8$-colored diagram, the bicoloring is flipped when passing under another bicolored strand in $Q_8$. When passing under a strand that does not belong to the subgroup $Q_8$, the bicoloring does not necessarily flip and the color may change.
    }
    \label{fig:TCrossing}
\end{figure}

\begin{figure}[h!]
    \centering
    \includegraphics[scale=.85]{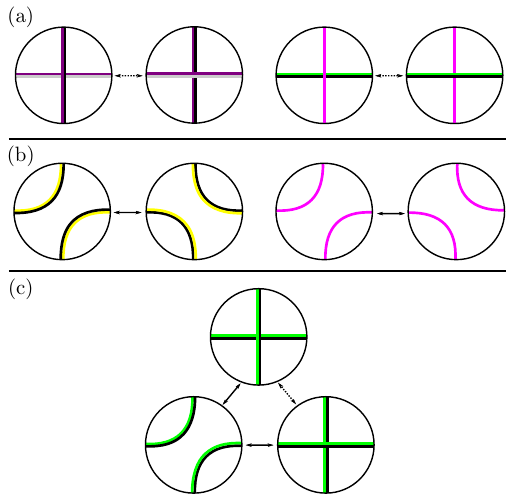}
    \caption{\textbf{Topologically allowed local modifications.} The two types of local modifications considered, strand crossings and reconnections, are marked by dashed and solid arrows respectively. 
    (a) Strand crossings may occur between strands having identical colored parts of the bicoloring, as well as between a purple strand and any other strand. 
    (b) Reconnections may only occur between strands with identical bicolorings. 
    (c) A crossing between the bicolored strands of the same kind may be modified with both the strand crossing and reconnection.}
    \label{fig:modificationrules}
\end{figure}

\begin{figure}[h!]
    \centering
    \includegraphics[scale=1.]{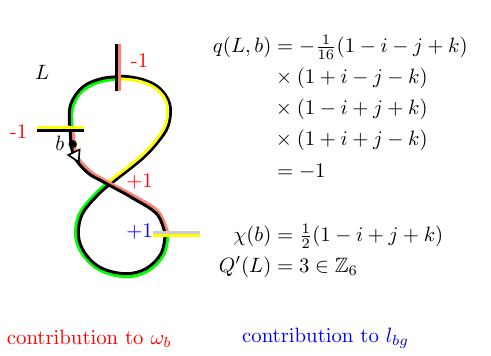}
    \caption{\textbf{Computation of $Q'(L)$ and $Q_b$.} 
    Example computation of the $Q'(L)$-invariant for a black loop $L$ in a $T^*$-colored link diagram. The contributions of the displayed crossings to $\omega_b$ and $l_{bg}$ are also marked in the figure by red and blue text, respectively.
    }\label{fig:Q'def}
\end{figure}

\begin{figure}[h!]
    \centering
    \includegraphics[scale=1]{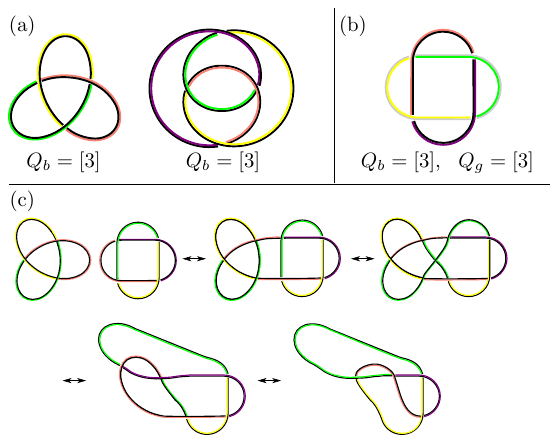}
    \caption{\textbf{Examples of topologically protected $\mathbf{T^*}$-colored knots and links.} 
    (a) Topologically protected $T^*$-colored trefoil knot and figure-eight knot and the values of their invariants. 
    (b) Topologically protected $T^*$-colored Solomon's link, tied from one black and one gray loop. 
    (c) A trefoil may be turned into a figure-eight knot by combining it with a $T^*$-colored Solomon's link consisting of two black loops. The black Solomon's link has trivial invariants, but it is unknown if it is topologically protected or not. Hence, even though the $T^*$-colored trefoil and the figure-eight knot have equivalent invariants, it is unknown if they can be turned into each other by topologically allowed modifications.
    }\label{fig:examples}
\end{figure}

\clearpage

\pagebreak

\widetext
\begin{center}
\textbf{\large Supplemental Material for: Topologically Protected Vortex Knots in an Experimentally Realizable System}
\end{center}

\section{Elements of the binary tetrahedral group as rotations of a decorated tetrahedron}

\begin{figure}[h!]
    \centering
    \includegraphics[scale=1]{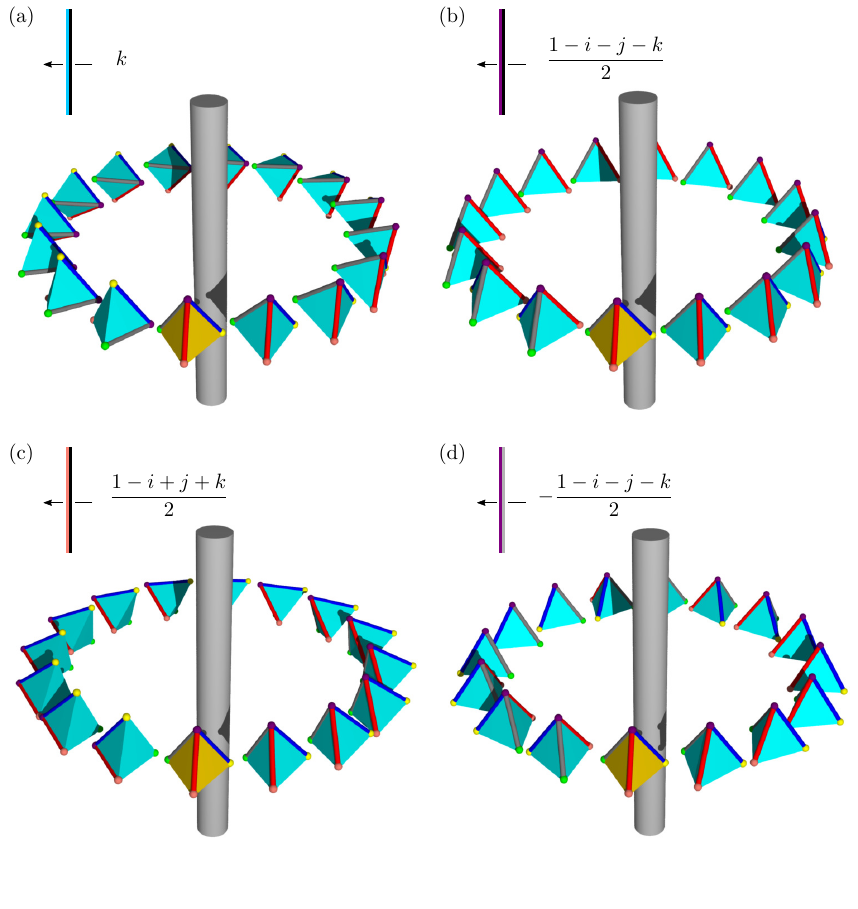}
    \caption{
    (a) Elements $i$, $j$, and $k$ correspond to rotating the tetrahedron by 180 degrees about the centre of the corresponding colored edge, according to the right hand rule. 
    (b), (c) Elements in the top row of  Fig.~\ref{fig:TColors} correspond to rotating the tetrahedron by 120 degrees about the corresponding colored vertex, according to the right hand rule.
    (d) Elements of the middle row of Fig.~\ref{fig:TCrossing} correspond to rotating the tetrahedron by 240 degrees about the corresponding colored vertex, according to the left hand rule.
    }
    \label{fig:tetrapic}
\end{figure}

\clearpage

\section{Order parameter space of spin-2 cyclic BEC}

The order parameter space for the cyclic phase of spin-2 BEC is 
\begin{equation}
    M_\mathrm{C} = [S^1 \times \mathrm{SO}(3)] / T,
\end{equation}
where the circle $S^1$ is regarded as the unit complex numbers, and $T$ is the group of the rotational symmetries of a tetrahedron~\cite{semenoff:2007, turner:2007}. It may be realized as the subgroup of $S^1 \times \mathrm{SO}(3)$ generated by the elements $(1, \mathrm{diag}(-1,-1,1)), (1, \mathrm{diag}(-1,1,-1)),$ and $(1, \mathrm{diag}(1,-1,-1))$ and
\begin{equation}
    \left\{\exp\left(\dfrac{2 \pi i}{3} \right), 
    \begin{bmatrix}
    0 & 1 & 0 \\
    0 & 0 & 1 \\
    1 & 0 & 0 
    \end{bmatrix}
    \right\} \in S^1 \times \mathrm{SO}(3).
\end{equation}
Moreover, the fundamental group of $M_\mathrm{C}$ may be realized as a subgroup
\begin{equation}
    \pi_1(M_\mathrm{C}) \subset \mathbb{Z} \times T^*
\end{equation}
where the first component keeps records the fractional scalar-phase winding in multiples of $1/3$. In particular, the composition with the projection map $\mathbb{Z} \times T^* \to T^*$ yields a surjective group homomorphism 
\begin{equation}
    \pi_1(M_\mathrm{C}) \to T^*,
\end{equation}
because for any $\alpha \in T^*$, there exists $n \in \mathbb{Z}$ such that $(n,\alpha) \in \pi_1(M_\mathrm{C})$. Moreover, any two choices of such $n$ differ by an integer multiple of three.

\section{$Q$-invariants}

Here, we define the $Q$-invariants for $T^*$-colored link diagrams, and check their basic properties. In the \emph{conjugate-inverse} equivalence relation of a group $G$, elements $g,h \in G$ are regarded as equivalent if they are conjugate or if $g$ and $h^{-1}$ are conjugate. This relation partitions $G$ into \emph{conjugation-inversion classes}, the elements of which are unions of the conjugacy classes of $g$ and $g^{-1}$ for a $g \in G$. The conjugation-inversion classes correspond to the types of topological vortices.
We will construct three $Q$-invariants, one for each nontrivial conjugation-inversion class of $T^*$ apart from $\{-1\}$ (purple strands). Each invariant is computed by considering all loops of the same type in a $T^*$-colored link diagram. Hereafter, we will ignore the purple strands, because $-1$ commutes with every element of $T^*$ (i.e., $-1$ is a \emph{central element}), and therefore purple loops are not able to link or knot in a topologically protected fashion.

Consider a $T^*$-colored link $\mathcal L$ represented by a $T^*$-colored link diagram consisting of $n$ loops $L_1,\dots,L_n$. Each component, $L_i$, can be equipped with a basepoint, $b_i$, on one of the arcs of the loop in the link diagram. Each loop $L_i$ is oriented in such a fashion that the black or grey color of the bicoloring is on the right side when traversing along a loop starting from the basepoint.

\begin{defn}
Let ($L_i,b_i$) be a loop with a basepoint in a $T^*$-colored link diagram. Let $\chi(b_i)\in T^*$ be the element corresponding to the basepoint. The element $q(L_i,b_i) \in T^*$ is defined as the product of elements corresponding to the non-purple loops $L_i$ crosses under. The order of the product is taken from right to left when traversing the loop, starting from the basepoint, and traversing in the direction of the orientation defined above. An example computation is depicted in Fig.~\ref{fig:Q'def}.
\end{defn}

\begin{lem}\label{lem:qIsPower}
Suppose  
\begin{equation}
    \chi(b_i) \in \left\{\pm i, \pm j, \pm k, \dfrac{1 \pm i \pm j \pm k}{2} \right\}.
\end{equation}
Then the element $q(L_i,b_i)$ is an integer power of $\chi(b_i)$.
\end{lem}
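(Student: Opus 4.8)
The plan is to reduce the statement to a pure group-theoretic fact: that $q(L_i,b_i)$ lies in the centralizer of $\chi(b_i)$ in $T^*$, together with the observation that for the listed colors this centralizer is exactly the cyclic group they generate.

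Write $c := \chi(b_i)$. First I would track how the color of $L_i$ evolves as one traverses the loop from $b_i$. Label the undercrossings of $L_i$ as $1,\dots,m$ in the order they are met, and let $h_t \in T^*$ be the (signed) color of the overpassing strand at the $t$-th undercrossing, i.e.\ the factor contributing to the product $q(L_i,b_i)=h_m\cdots h_1$. By the Wirtinger relation depicted in Fig.~\ref{fig:TCrossing}, the color of the understrand is conjugated by the overstrand at each undercrossing, so the colors $c_{t-1}$ and $c_t$ before and after the $t$-th crossing satisfy $c_t = h_t\, c_{t-1}\, h_t^{-1}$. Iterating from $c_0=c$ and using that traversing the entire loop returns to the starting arc, so that $c_m=c_0=c$, gives
\begin{equation}
q(L_i,b_i)\, c\, q(L_i,b_i)^{-1} = c,
\end{equation}
that is, $q(L_i,b_i)$ commutes with $c$. (Equivalently, $q(L_i,b_i)$ is the image under the coloring homomorphism of the geometric longitude of $L_i$, which always commutes with the meridian $c$.)

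It then remains to identify the centralizer $C_{T^*}(c)$ for the colors in the hypothesis. Here I would use that $T^*$ is a subgroup of the unit quaternions and that each such $c$ is a non-central unit quaternion, $c=\cos\theta+\sin\theta\,\mathbf{n}$ with $\sin\theta\neq 0$; its centralizer in the unit quaternions is precisely the circle of unit quaternions with the same rotation axis $\mathbf{n}$. Intersecting this circle with $T^*$ leaves exactly the powers of $c$: for $c\in\{\pm i,\pm j,\pm k\}$ one obtains $\{\pm 1,\pm c\}=\langle c\rangle$ of order $4$, and for a black color $c=\tfrac{1\pm i\pm j\pm k}{2}$ one obtains the six elements $\{1,c,c^2,-1,-c,-c^2\}=\langle c\rangle$ of order $6$. (Equivalently, by orbit--stabilizer $|C_{T^*}(c)|=24/|\mathrm{class}(c)|$ equals $4$ and $6$ respectively, matching the order of $c$, and $\langle c\rangle\subseteq C_{T^*}(c)$ forces equality.) Combining the two steps, $q(L_i,b_i)\in C_{T^*}(c)=\langle c\rangle$, so $q(L_i,b_i)=c^{N}$ for some integer $N$, which is the claim.

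The main obstacle I anticipate is not the centralizer computation, which is a short finite check, but the careful bookkeeping in the first step: one must verify that the element conjugating $c$ at each crossing is literally the same signed factor $h_t$ (with the right-to-left ordering and the ``opposite direction gives the inverse'' rule of Fig.~\ref{fig:TColors}) that appears in the defining product $q(L_i,b_i)$, so that the accumulated conjugator is exactly $q(L_i,b_i)$ rather than its inverse or reverse. This is precisely the standard meridian--longitude commutation relation of a Wirtinger presentation, specialized to one component of the link. It is worth noting that the hypothesis singles out exactly the elements whose centralizer is self-generated: the gray colors $\tfrac{-1\pm i\pm j\pm k}{2}$ have order $3$ but centralizer $\langle c\rangle$ of order $6$, so there $q(L_i,b_i)$ need only lie in a cyclic group strictly larger than $\langle\chi(b_i)\rangle$, which is why those colors are excluded here and instead handled through the black--gray interchange.
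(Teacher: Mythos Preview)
Your proof is correct and follows exactly the paper's strategy: the paper's two-sentence argument states that consistency of the coloring forces $q(L_i,b_i)$ to commute with $\chi(b_i)$, and that for the listed elements this commutation implies $q(L_i,b_i)$ is a power of $\chi(b_i)$. You have simply filled in the details the paper leaves implicit, including the explicit centralizer computation and the remark on why the gray colors are excluded (which the paper records separately as the Remark following the lemma).
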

\begin{proof}
For the coloring of the loop $L_i$ to be consistent, $q(L_i,b_i)$ must commute with $\chi(b_i)$. Under the above assumption on $\chi(b_i)$ this implies that $q(L_i,b_i)$ is a power of $\chi(b_i)$. 
\end{proof}

\begin{rem}
The conclusion of Lemma~\ref{lem:qIsPower} is not true for other elements of $T^*$. For example, $q = (1 - i - j - k) / 2$ commutes with $\chi = (-1 + i + j + k)/2$ even though $q$ is not a power of $\chi$. However, if the roles are reversed, then Lemma~\ref{lem:qIsPower} is true: $\chi = q^4$.
\end{rem}

\begin{defn}
Let $(L_i,b_i)$ be as above, and denote by $n_i$ the order of the element $\chi(b_i) \in T^*$. If $\chi(b_i) \in \{\pm i, \pm j, \pm k, (1 \pm i \pm j \pm k) / 2 \}$, then we can define
\begin{equation}\label{eq:Q'}
    Q'(L_i,b_i) := \log_{\chi(b_i)}(q(L_i,b_i)) \in \mathbb{Z}_{n_i},
\end{equation}
where $\log_{\chi(b_i)}$ denotes the \emph{discrete logarithm} with base $\chi(b_i)$. The right side of Eq.~\ref{eq:Q'} is well defined by Lemma~\ref{lem:qIsPower}.
\end{defn}

\begin{lem}\label{lem:basePointIndependence}
$Q'(L_i, b_i)$ is independent of the basepoint $b_i$.
\end{lem}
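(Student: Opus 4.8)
The plan is to reduce the claim to invariance under elementary moves of the basepoint, and then to read off the result from the Wirtinger relation. First I would note that the arcs of $L_i$ in the diagram are exactly the segments between consecutive undercrossings of $L_i$, so any two admissible basepoints are joined by finitely many moves of two kinds: sliding within a single arc (possibly over other strands at overcrossings), and sliding across one undercrossing of $L_i$ into the adjacent arc. A within-arc move changes neither $\chi(b_i)$ (an overstrand keeps its color through a crossing) nor the cyclic list of undercrossings and their order, so it leaves both $q(L_i,b_i)$ and $Q'(L_i,b_i)$ untouched. Hence it suffices to analyze a single move across an undercrossing.

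For that move I would enumerate the undercrossings met along $L_i$ from $b_i$ as $c_1,\dots,c_m$, with group elements $g_1,\dots,g_m \in T^*$ (each already carrying the correct sign, since crossing under a strand in the opposite direction contributes the inverse element), so that $q(L_i,b_i)=g_m\cdots g_1$. The Wirtinger relation conjugates the understrand color across $c_1$, $\chi(b_i)\mapsto g_1\,\chi(b_i)\,g_1^{-1}$. If $b_i'$ denotes the basepoint obtained by sliding past $c_1$, the undercrossings are then met in the cyclic order $c_2,\dots,c_m,c_1$, so the key identity to establish is
\[
  q(L_i,b_i') = g_1\,(g_m\cdots g_2) = g_1\,\bigl(q(L_i,b_i)\,g_1^{-1}\bigr) = g_1\,q(L_i,b_i)\,g_1^{-1},
\]
together with $\chi(b_i')=g_1\,\chi(b_i)\,g_1^{-1}$: both $q$ and $\chi$ are conjugated by the same element $g_1$.

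The conclusion then follows quickly. Writing $q(L_i,b_i)=\chi(b_i)^{\,Q'(L_i,b_i)}$ as in Lemma~\ref{lem:qIsPower} and using that conjugation is an automorphism of $T^*$,
\[
  q(L_i,b_i') = g_1\,\chi(b_i)^{\,Q'(L_i,b_i)}\,g_1^{-1} = \bigl(g_1\,\chi(b_i)\,g_1^{-1}\bigr)^{Q'(L_i,b_i)} = \chi(b_i')^{\,Q'(L_i,b_i)}.
\]
Because conjugate elements share the same order, $\chi(b_i')$ has order $n_i$ as well, so the discrete logarithm lands in the same $\mathbb{Z}_{n_i}$ and $Q'(L_i,b_i')=Q'(L_i,b_i)$. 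Iterating over the undercrossings joining any two basepoints would then give full basepoint independence.

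I expect the only delicate point to be the bookkeeping of sign conventions and the right-to-left ordering at \emph{self}-crossings, where $L_i$ serves as both over- and understrand. The argument should be robust to these choices, however: whatever the precise signs, one basepoint move conjugates $q$ and $\chi$ by a single common element of $T^*$, and since conjugation commutes with taking powers, the exponent $Q'$ and hence its class in $\mathbb{Z}_{n_i}$ is unaffected.
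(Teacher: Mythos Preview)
Your argument is correct and is essentially the same as the paper's: both show that moving the basepoint conjugates $q$ and $\chi$ by a common element, so the exponent in $\mathbb{Z}_{n_i}$ is preserved. The only cosmetic difference is that the paper handles two arbitrary basepoints at once by splitting the product as $q(L,b_1)=q_2q_1$ and using $\chi(b_2)=q_1\chi(b_1)q_1^{-1}$, whereas you iterate the same conjugation one undercrossing at a time.
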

\begin{proof}
The proof is presented in Fig.~\ref{fig:basepoint}.
\end{proof}

From now on, we will use simpler notation $Q'(L_i) := Q'(L_i,b_i)$.  

\begin{defn}
Let $b\equiv \mathrm{Cl}(\frac{1-i-j-k}{2})$, $g\equiv \mathrm{Cl}(-\frac{1-i-j-k}{2})$ and $c\equiv \mathrm{Cl}(i)$, where $\mathrm{Cl}(x)$ denotes the conjugation-inversion class of the element $x$. More informally, we will refer to the strands of classes $b$, $g$, and $c$ as \emph{black}, \emph{gray}, and \emph{$Q_8$ strands}, respectively. 

We define a bijection $\rho: T^* \to T^*$ by the formula
\begin{equation}
    \rho(a) = 
    \begin{cases}
    a & \text{if $a \in Q_8 \subset T^*$;} \\
    -a  & \text{otherwise.}
    \end{cases}
\end{equation}
The bijection $\rho$ is not a group homomorphism: for example, the degree-three element $(-1 + i + j + k) / 2$ maps to the degree-six element $(1 - i - j - k) / 2$. However, as $-1$ is a central and $T^* \backslash Q_8$ is stable under conjugation, $\phi$ is compatible with conjugations in the sense that
\begin{equation}\label{eq:rhoformula}
    \rho(a b a^{-1}) = \rho(a) \rho(b) \rho(a)^{-1}.
\end{equation}
Hence, $\rho$ induces a symmetry on the set of all $T^*$-colored link diagrams that exchanges the black and the gray strands.
\end{defn}

\begin{defn}[The $Q$-invariants]
For the three conjugation-inversion classes defined above, we define the corresponding $Q$-invariants as
\begin{align}
    Q_b(\mathcal L)&:=\sum_{L_i\text{ of class $b$}} Q'(L_i)-4l_{bg}-\omega_b \in \mathbb{Z}_6 \\
    Q_g(\mathcal L)&:=Q_b(\rho(\mathcal L)) \in \mathbb{Z}_6  \\
    Q_c(b_i)_{L_i\text{ of class $c$}}&:=\sum_{L_i\text{ of class $c$}} Q'(L_i)-\omega_c \in \mathbb{Z}_4 \label{eq:defQc}
\end{align}
Above, the \emph{writhe} term $\omega_x$ is the signed count of all the crossings involving loops of the conjugacy class $x$, where the sign of a right-handed and left-handed crossings are $+$ and $-$, respectively (Fig.~\ref{fig:writhe}). Similarly, $l_{bg}$ is the signed count of all the crossings where black strand crosses under a gray one. 
\end{defn}

For $Q_8$-loops the orientation of the loop depends on the chosen basepoint, which affects the signs of the crossings. Correspondingly, Eq.~\ref{eq:defQc} seems to depend on the basepoints $b_i$ of the $Q_8$-loops $L_i$. However, by the following observation, the invariant $Q_c$ is independent of these choices.

\begin{lem}\label{lem:writhedef}
Let $\cal L$ be a link diagram with loops $L_1,\dots,L_n$. The signed count of crossings of $\cal L$, modulo 4, does not depend on the orientation of the loops $L_i$.
\end{lem}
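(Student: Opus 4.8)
The plan is to reduce the statement to the effect of reversing the orientation of a \emph{single} loop, and then to show that each such reversal changes the signed crossing count by a multiple of four. Any two orientations of the diagram differ by reversing the orientations of some subset of the loops $L_1,\dots,L_n$, and reversing a subset is accomplished one loop at a time; hence it suffices to prove that flipping the orientation of a single loop $L_i$ alters the signed crossing count by an element of $4\mathbb{Z}$.

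First I would recall the elementary rule governing how the sign of a crossing responds to orientation: the sign is determined by the two oriented strands meeting there together with the orientation of the plane, so reversing exactly one of the two strands negates the sign, whereas reversing both leaves it unchanged. Consequently, reversing the orientation of $L_i$ has exactly three effects on the crossings of the diagram: crossings not involving $L_i$ are untouched, since neither strand is reversed; self-crossings of $L_i$ are untouched, since both strands are reversed; and the sign of every crossing at which $L_i$ meets a \emph{distinct} loop $L_j$ is negated, since precisely one strand is reversed. Writing $w_{ij}$ for the signed count of crossings between $L_i$ and $L_j$, the net change in the signed crossing count upon reversing $L_i$ is therefore $-2\sum_{j\neq i} w_{ij}$.

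It then remains to observe that each $w_{ij}$ is even. This is exactly the statement that the linking number $\mathrm{lk}(L_i,L_j)=\tfrac12 w_{ij}$ of two distinct components is an integer; equivalently, the unsigned number of crossings between two distinct closed curves in the plane is even by a parity argument, and the signed count has the same parity as the unsigned count. Hence $w_{ij}\in 2\mathbb{Z}$, the change $-2\sum_{j\neq i} w_{ij}$ lies in $4\mathbb{Z}$, and the signed crossing count is invariant modulo $4$, as claimed.

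I expect no genuine obstacle here; the only care needed is bookkeeping, namely separating the self-crossings of $L_i$ (which are unaffected) from its crossings with the other loops (whose signs all flip), and invoking the integrality of the linking number at the right moment. I would also note that the congruence is sharp: a single reversal can change the signed crossing count by $\pm 4$, as for a Hopf link, so the invariance genuinely holds modulo $4$ rather than modulo $8$.
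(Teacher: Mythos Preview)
Your proof is correct and follows essentially the same route as the paper's: reduce to reversing a single loop, observe that self-crossings are unaffected while every crossing of $L_i$ with another loop has its sign negated, and then use a parity fact about crossings of closed curves to conclude the change lies in $4\mathbb{Z}$. The only cosmetic difference is that the paper appeals to the evenness of the total unsigned number of crossings of $L_i$ with the other loops (an even number of $\pm 2$ contributions sums to a multiple of $4$), whereas you organise the change as $-2\sum_{j\neq i} w_{ij}$ and invoke the integrality of each pairwise linking number; these are equivalent observations.
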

\begin{proof}
It suffices to prove the following: given an $n$-loop oriented link diagram, the reversal of the orientation of one loop conserves the signed count $\omega$ of the crossings modulo 4. A loop $L_i$ in a link diagram is part of an even number of crossings with other loops. Reversing the orientation of $L_i$ therefore flips the signs of an even number of crossings. As the sign-flip at each crossing changes $\omega$ by $\pm 2$, it follows that the total effect of an even number of crossings leaves $\omega$ invariant modulo 4.
\end{proof}

Next, we want to prove the topological invariance of $Q_b, Q_g$ and $Q_c$. First we show that these invariants depend only on the $T^*$-colored link $\mathcal L$, and not of the specific diagram that is chosen to represent it, by verifying that the invariants are conserved in Reidemeister moves. Then, we show that the invariants are conserved in topologically allowed local surgeries, namely, topologically allowed strand crossings and local reconnections. Thus, the $Q$-invariants may be employed in detecting topologically protected $T^*$-colored links.

\begin{lem}
The $Q$-invariants are conserved under Reidemeister moves.
\end{lem}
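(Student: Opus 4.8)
The plan is to verify invariance of $Q_b$, $Q_g$, and $Q_c$ under each of the three Reidemeister moves separately, checking that every term in the definition of the invariant changes in a compensating way. Since $Q_g(\mathcal{L}) = Q_b(\rho(\mathcal{L}))$ by definition, and $\rho$ is a bijection on $T^*$-colored diagrams that respects conjugation by Eq.~\eqref{eq:rhoformula}, it suffices to prove invariance of $Q_b$ and $Q_c$; the invariance of $Q_g$ then follows formally. I would treat $Q_b$ and $Q_c$ in parallel, since both have the shape $\sum_{L_i} Q'(L_i) - (\text{crossing correction})$, with $Q_c$ living in $\mathbb{Z}_4$ and carrying only the writhe term $\omega_c$, while $Q_b$ lives in $\mathbb{Z}_6$ and carries the extra terms $\omega_b$ and $4 l_{bg}$.

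For each move I would track three quantities: the discrete-logarithm sum $\sum_i Q'(L_i)$, the writhe-type count, and (for $Q_b$) the linking count $l_{bg}$. The \textbf{Reidemeister I} move adds or removes a single self-crossing of one loop. Here the key point is that the loop passes under itself exactly once, contributing a factor of $\chi(b_i)$ (or its inverse) to $q(L_i,b_i)$, which shifts $Q'(L_i)$ by $\pm 1$; this is exactly cancelled by the corresponding $\pm 1$ change in the writhe term $\omega_b$ (respectively $\omega_c$), since the new crossing is monochromatic of the relevant class. For \textbf{Reidemeister II}, two strands are pushed across each other, creating or destroying two crossings of opposite sign. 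I would argue that their contributions to both the product $q$ and to the writhe cancel in pairs: the two undercrossings contribute an element and its inverse in adjacent positions (so they cancel in $T^*$ before taking the discrete log, using Lemma~\ref{lem:qIsPower} to ensure we are comparing powers of the same base), while the two crossing signs are opposite and cancel in $\omega$. The subtlety is that the two new crossings may involve strands of different classes, so I must also confirm the $l_{bg}$ correction in $Q_b$ is unaffected (the two crossings contribute $+1$ and $-1$ to $l_{bg}$, or neither contributes).

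The \textbf{Reidemeister III} move is the main obstacle and the heart of the argument. It slides one strand past a crossing of two others, preserving the number of crossings but changing their relative vertical order and the arcs involved. Here no crossing is created or destroyed, so the writhe terms $\omega_b$, $\omega_c$, and the count $l_{bg}$ are all manifestly unchanged; the entire content is that $\sum_i Q'(L_i)$ is preserved. This reduces to the \emph{Wirtinger relation} (Fig.~\ref{fig:1}(b)): rerouting a strand over a crossing replaces one undercrossing label by its conjugate, and the products $q(L_i,b_i)$ for the affected loops are altered by exactly the conjugation that the Wirtinger relation prescribes. Because conjugate elements have the same order and conjugation by a fixed element acts as an automorphism, the discrete logarithm $Q'(L_i)$ is invariant under this conjugation—this is precisely where I expect the calculation to require care, since one must verify that the rearrangement of the product $q(L_i,b_i)$ induced by the move is genuinely a global conjugation (or a cyclic permutation of commuting factors, handled as in the basepoint-independence argument of Lemma~\ref{lem:basePointIndependence}) rather than an uncontrolled reshuffling. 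Once this conjugation-invariance of each $Q'(L_i)$ is established, invariance under Reidemeister III—and hence under all three moves—follows, completing the proof.
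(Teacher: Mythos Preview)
Your approach is correct and essentially the same as the paper's: for each Reidemeister move you track the change in $\sum Q'(L_i)$ against the change in the crossing-count corrections $\omega_b$, $\omega_c$, $l_{bg}$, exactly as the paper does in Fig.~\ref{fig:Reidemeister}. One small simplification for Reidemeister~III: the product $q(L_i,b_i)$ of the bottom strand is in fact \emph{unchanged} (not merely conjugated), because the Wirtinger relation gives $a\cdot (a^{-1}ba) = b\cdot a$ for the two undercrossing factors, so no appeal to conjugation-invariance of the discrete logarithm is needed.
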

\begin{proof}
The proof is presented in Fig.~\ref{fig:Reidemeister}.
\end{proof}

\begin{lem}\label{lem:crossings}
The $Q$-invariants are conserved in topologically allowed local strand crossings.
\end{lem}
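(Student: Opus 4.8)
The plan is to verify that each of the three invariants $Q_b$, $Q_g$, and $Q_c$ is unchanged under the two types of topologically allowed strand crossings described in Fig.~\ref{fig:modificationrules}(a): namely, crossings between two strands whose bicolorings share an identical colored part, and crossings between a purple strand and any other strand. Since a strand crossing is the local move in which one strand is lifted over (or pushed under) another without changing any colorings — it merely removes or introduces an over/under crossing — my task reduces to checking, case by case, that each ingredient in the definition of each invariant is insensitive to this move. Because $Q_g(\mathcal{L}) = Q_b(\rho(\mathcal{L}))$ and $\rho$ is a symmetry of $T^*$-colored diagrams exchanging black and gray strands, it suffices to handle $Q_b$ and $Q_c$; the invariance of $Q_g$ then follows formally from that of $Q_b$ applied to $\rho(\mathcal{L})$.

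The key observation I would exploit is what a topologically allowed strand crossing does to the building blocks. A crossing move toggles whether strand $A$ passes under strand $B$, so it changes the linking data only through (i) the product $q(L,b)$, via the insertion or deletion of the group element associated with $B$ at the relevant point along $L$, and (ii) the various signed crossing counts $\omega_b$, $\omega_c$, and $l_{bg}$. First I would treat the purple case: since the purple element is $-1$, which is central and, by the crossing rules in Fig.~\ref{fig:TCrossing}(b), does not alter a bicoloring, passing a strand over or under a purple strand leaves all colorings unchanged; moreover a purple loop participates in no term of any $Q$-invariant (purple strands being explicitly ignored), so such crossings are manifestly harmless. The substantive case is a crossing between two strands with identical colored parts. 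Here the key point is that when $A$ crosses under $B$ and $B$ carries the \emph{same} colored part as $A$'s coloring at that point, the element inserted into $q(L,b)$ is a power of $\chi(b)$ — indeed, by the crossing rules, it commutes with and is a power of the base element — so the insertion contributes a definite amount to $Q'(L)$ that must be shown to be exactly cancelled by the corresponding change in the signed crossing count $\omega$ (or in $l_{bg}$, in the mixed black/gray situation).

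Concretely, for $Q_c$ I would argue that a topologically allowed crossing between two $Q_8$ strands with the same colored part inserts a factor of $\chi(b_i)^{\pm 2}$ (an order-two element acquires exponent contribution $\pm 2 \bmod 4$) into $q(L_i,b_i)$, shifting $\sum Q'(L_i)$ by $\pm 2$, while simultaneously the move changes $\omega_c$ by the same $\pm 2$; the two shifts cancel in $\mathbb{Z}_4$. For $Q_b$ I would carry out the analogous bookkeeping, being careful to separate black-under-black crossings (which feed $\omega_b$) from black-under-gray crossings (which feed $l_{bg}$ with its coefficient $4$), and to track the sign conventions fixed by the right-hand rule. The main obstacle I expect is precisely this sign-and-coefficient bookkeeping: one must confirm that the discrete-logarithm contribution to $Q'(L)$ from the newly created or removed crossing matches, with the correct sign and modulus, the simultaneous change in $\omega_b$, $\omega_c$, or $4 l_{bg}$, and that the two strands' roles (which passes under which) are handled consistently. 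Rather than grinding through every sub-case algebraically, I would organize the verification diagrammatically, as is done for the Reidemeister moves and the basepoint-independence lemma, presenting the cancellations in a figure and reducing the proof to inspection of finitely many local pictures.
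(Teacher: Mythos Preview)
Your plan is essentially the same as the paper's: a diagrammatic case analysis showing that the change in $\sum Q'(L_i)$ caused by swapping an over/under crossing is exactly compensated by the change in the appropriate writhe or linking term, together with the reduction of $Q_g$ to $Q_b$ via the symmetry $\rho$. The paper carries this out for $Q_b$ in a figure exactly as you describe, separating the same-loop case from the two-loop case, and within the latter distinguishing black--black (feeding $\omega_b$) from black--gray (feeding $4l_{bg}$).

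Two minor points where your sketch deviates. First, for $Q_c$ the paper does not redo the bookkeeping from scratch but instead isolates a clean algebraic fact: if $a\in\{\pm i,\pm j,\pm k\}$ and $b\in T^*$ commutes (or anticommutes) with $a$, then $b\in Q_8$. This shows that every topologically allowed crossing touching a $Q_8$ strand involves only $Q_8$ strands, so the $Q_c$-invariance reduces verbatim to the $Q_8$-colored case already handled in Ref.~\cite{annala:2022}. You implicitly assume this restriction without stating it. Second, your description of the $Q_c$ mechanism---that the move ``inserts a factor of $\chi(b_i)^{\pm 2}$'' into $q(L_i,b_i)$---is not quite accurate: a strand crossing swaps which strand is under, so it \emph{removes} one factor from one $q$ and \emph{adds} another factor to another $q$ (or to the same $q$ at a different position, if both strands lie on the same loop). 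The net shift in $\sum Q'$ still matches the $\pm 2$ change in $\omega_c$, but the cancellation is not literally ``$\chi^{\pm2}$ versus $\pm2$'' in a single term. Since you already flag the sign-and-coefficient bookkeeping as the crux and propose to resolve it diagrammatically, this is only a wording issue, not a gap in the strategy.
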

\begin{proof}
Let $a \in \{\pm i, \pm j, \pm k \}$, and suppose that $b \in T^*$ either commutes or anticommutes with $a$. Then $b \in Q_8 \subset T^*$. Hence, the proof of Ref.~\cite[Lemma~6]{annala:2022} generalizes to this case. The conservation of $Q_b$ is treated in Fig.~\ref{fig:crossing}, from which the conservation of $Q_g$ follows immediately.
\end{proof}

\begin{lem}\label{lem:reconnections}
The $Q$-invariants are conserved in topologically allowed local reconnections.
\end{lem}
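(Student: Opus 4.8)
The plan is to reduce everything to a single local computation at the site of the reconnection, mirroring the structure of the proof of Lemma~\ref{lem:crossings} and of Reidemeister invariance. First I would use the symmetry $\rho$ of Eq.~\ref{eq:rhoformula}: since $\rho$ exchanges black and gray strands and $Q_g(\mathcal{L}) = Q_b(\rho(\mathcal{L}))$ by definition, conservation of $Q_g$ under a gray reconnection follows formally from conservation of $Q_b$ under the corresponding black reconnection, so it suffices to treat $Q_b$ and $Q_c$. The crucial structural observation is that a topologically allowed reconnection occurs only between two strands carrying identical bicolorings (Fig.~\ref{fig:modificationrules}(b),(c)), so it preserves the conjugation-inversion class of every loop; in particular a black reconnection never involves a gray or $Q_8$ strand. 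Consequently the sum defining each invariant continues to range over a well-defined class, and the black-under-gray count $l_{bg}$ is untouched by a black reconnection, so the $-4 l_{bg}$ term drops out of the analysis.

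Next I would describe the local move itself. A reconnection is the orientation-preserving smoothing of a crossing between two strands of the same bicoloring $\chi$; by the Wirtinger rule the color $\chi$ is preserved across such a crossing, so the smoothed diagram is again consistently colored. The move is local: it removes exactly one crossing and either merges two loops into one or splits one into two, leaving all other crossings and their colors unchanged. I would treat the merge case and obtain the split case by reversing the move. Placing the basepoint of the merged loop $L''$ at the reconnection site, its monodromy product $q(L'',b)$ is the concatenation of the monodromy products of the two original loops with the single factor $\chi^{\pm 1}$ contributed by the now-removed crossing deleted. Since each original product is a power of the common element $\chi$ by Lemma~\ref{lem:qIsPower}, the exponents add, giving $Q'(L'') = Q'(L_1) + Q'(L_2) - \epsilon$ in $\mathbb{Z}_{n}$, where $\epsilon = \pm 1$ is the exponent of the deleted factor and $n$ is the common order of $\chi$ ($6$ for black, $4$ for $Q_8$).

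The computation then closes by matching this change against the writhe correction. Removing the crossing changes $\sum_L Q'(L)$ by $-\epsilon$ and changes the signed count $\omega_b$ (resp.\ $\omega_c$) by $-s$, where $s = \pm 1$ is the right-hand-rule sign of the crossing; all other terms are unchanged. Hence the net change in $Q_b$ is $-\epsilon + s$, and I would verify that the \emph{right-to-left} recording convention and the crossing-sign convention are calibrated so that $s = \epsilon$ for every admissible local configuration, making the change vanish. For $Q_c$ the same bookkeeping applies modulo $4$; the only extra point is that the orientation of a $Q_8$ loop, and hence the signs of its crossings, depend on the basepoint, but Lemma~\ref{lem:writhedef} guarantees that $\omega_c$ is well defined modulo $4$, so the cancellation is independent of these choices.

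I expect the main obstacle to be precisely this sign bookkeeping: confirming, case by case over the finitely many local orientation and over/under configurations, that the exponent $\epsilon$ of the deleted monodromy factor always equals the right-hand-rule sign $s$ of the removed crossing, and that discrete-logarithm additivity survives reduction modulo $n$ even when the reconnection is a self-reconnection of a single loop. This is exactly the content that the correction terms $-\omega_b$ and $-\omega_c$ were designed to absorb, and I would present the finitely many cases in a figure, as in Lemma~\ref{lem:crossings}.
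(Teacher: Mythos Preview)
Your proposal is correct and follows the same overall architecture as the paper's proof: reduce $Q_g$ to $Q_b$ via the $\rho$-symmetry, defer $Q_c$ to the $Q_8$ argument of Ref.~\cite{annala:2022}, and handle $Q_b$ by a single local computation at the reconnection site presented as a finite case check.

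The one noteworthy difference is the local model of the move. You treat a reconnection as the orientation-preserving \emph{smoothing of a crossing}, which deletes one monodromy factor $\chi^{\epsilon}$ and one crossing of sign $s$, and then you must verify $s=\epsilon$. The paper instead takes the basic reconnection to be the surgery on two \emph{antiparallel} strands with identical bicoloring (no crossing present); the parallel case is reduced to this by a first Reidemeister move. In that model no crossing is created or destroyed, so $\omega_b$ and $l_{bg}$ are manifestly unchanged and the whole claim boils down to the additivity $Q'(L)=Q'(L_1')+Q'(L_2')$ of discrete logarithms under concatenation---there is no $\epsilon$ versus $s$ bookkeeping at all. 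Your formulation is equivalent (the two models differ by moves already shown to preserve $Q_b$), and it has the virtue of handling the ``reconnection at a crossing'' case of Fig.~\ref{fig:modificationrules}(c) directly; the paper's formulation buys a cleaner computation with no sign cases to chase.
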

\begin{proof}
As in the proof of Lemma~\ref{lem:crossings}, the proof from Ref.~\cite{annala:2022} generalizes to prove the conservation of $Q_c$. The conservation of $Q_b$ is treated in Fig.~\ref{fig:reconnection}, from which the conservation of $Q_g$ follows immediately.
\end{proof}

Finally, we compare the $Q$-invariants defined here to that defined earlier in Ref.~\cite{annala:2022}.

\begin{prop}
Let $\mathcal L$ be a $T^*$-colored link that consists of $Q_8$-loops. Then 
\begin{equation*}
    Q_c(\mathcal{L}) = Q (\mathcal{L}) \in \mathbb{Z}_4,
\end{equation*}
where $Q(\mathcal{L})$ is the $Q$-invariant of Ref.~\cite{annala:2022}.
\end{prop}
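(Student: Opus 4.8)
The plan is to prove the identity by unwinding both definitions and matching them ingredient by ingredient. The key observation is that when $\mathcal{L}$ consists only of $Q_8$-loops, its $T^*$-coloring takes values in $Q_8 = \{\pm 1, \pm i, \pm j, \pm k\}$, so it is literally a $Q_8$-colored link and the $Q$-invariant of Ref.~\cite{annala:2022} is defined on the very same diagram. First I would recall the definition of $Q \in \mathbb{Z}_4$ from Ref.~\cite{annala:2022}: it is a sum of per-loop exponents corrected by a signed writhe term, exactly the shape of Eq.~\eqref{eq:defQc} but without any black/gray refinement. The goal is then to verify that each summand of $Q_c$ coincides with the corresponding summand of $Q$.

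Next I would match the terms. For a $Q_8$-loop with basepoint color $\chi(b_i) \in \{\pm i, \pm j, \pm k\}$ of order $4$, the element $q(L_i,b_i)$ is a product of $Q_8$-elements, namely those crossed under, and by Lemma~\ref{lem:qIsPower} it is a power of $\chi(b_i)$; its discrete logarithm $Q'(L_i) \in \mathbb{Z}_4$ is precisely the per-loop quantity entering $Q$. Because every strand lies in $Q_8$, there are neither black nor gray strands, so the correction term $l_{bg}$ is absent and the writhe term $\omega_c$ counts every crossing of the diagram (self-crossings and mutual crossings alike), which is exactly the writhe correction appearing in $Q$. Matching these term by term then yields $Q_c(\mathcal{L}) = Q(\mathcal{L})$.

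The main obstacle I anticipate is reconciling conventions rather than producing new mathematics. In particular, the orientation of a $Q_8$-loop is basepoint-dependent, so I must confirm that $Q_c$ and $Q$ resolve this ambiguity compatibly: Lemma~\ref{lem:basePointIndependence} guarantees that $Q'(L_i)$ is basepoint-independent, and Lemma~\ref{lem:writhedef} guarantees that $\omega_c$ is orientation-independent modulo $4$, so both ingredients of $Q_c$ are already well-defined in the same manner as in Ref.~\cite{annala:2022}. The remaining care is to check that the sign convention for crossings (the right-hand rule), the choice of discrete-logarithm base, and the per-crossing element contributions agree element-by-element within $Q_8$; since the $T^*$ crossing rules of Fig.~\ref{fig:TCrossing} restrict on $Q_8$ to the $Q_8$ crossing rules used in Ref.~\cite{annala:2022}, this reduces to a finite verification and completes the proof.
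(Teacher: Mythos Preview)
Your proposal rests on the assumption that the $Q$-invariant of Ref.~\cite{annala:2022} has the \emph{same shape} as $Q_c$, namely a sum of $Q'(L_i)$ over all loops minus the total writhe $\omega_c$. That is not how $Q$ is defined. In Ref.~\cite{annala:2022} one first fixes one of the three $Q_8$ colors (red, gray, or blue) and sets
\[
Q_{\mathrm{col}}(\mathcal{L}) \;=\; \sum_{L_i \text{ of color } \mathrm{col}} Q'(L_i) \;-\; \omega_{\mathrm{col}} \pmod 4,
\]
where $\omega_{\mathrm{col}}$ counts only the crossings in which \emph{both} strands carry that single color; one then proves that the three colored invariants agree and calls the common value $Q(\mathcal{L})$. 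By contrast, $Q_c$ sums $Q'(L_i)$ over \emph{all} $Q_8$-loops and subtracts the total writhe $\omega_c$, which also includes the bicolor crossings. Thus the two expressions do not match term by term, and the ``finite verification'' you propose cannot close the argument.

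The discrepancy between $\omega_c$ and $\omega_{\mathrm{col}}$ is exactly the signed count of bicolor crossings, i.e., twice the sum of the cross-color linking numbers $l_{rg}+l_{rb}+l_{gb}$. The paper's proof handles this by summing the three colored invariants, obtaining $3Q(\mathcal{L})\equiv -Q(\mathcal{L})\pmod 4$ on one side and $\sum_i Q'(L_i)-\omega'$ (with $\omega'$ the unicolor writhe) on the other, and then uses two further non-trivial facts from Ref.~\cite{annala:2022}: that the cross-color linking numbers are all congruent modulo~$2$ to a single \emph{linking invariant} $l(\mathcal{L})$, and that $l(\mathcal{L})\equiv Q(\mathcal{L})\pmod 2$. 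These ingredients convert $-Q+2l$ into $Q$ in $\mathbb{Z}_4$. Your outline is missing precisely this bridge; without invoking the linking invariant and its congruence to $Q$, the identification $Q_c=Q$ does not follow.
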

\begin{proof}
In Ref.~\cite{annala:2022}, the $Q$ invariant was defined via the auxiliary \emph{colored invariants} $Q_\mathrm{col}$, where $\mathrm{col}$ is either red, gray, or blue, and, slightly rephrasing the original definition,
\begin{equation}
    Q_{\mathrm{col}} (\mathcal{L}) := \sum_{L_i \text{  of color col}} Q'(L_i)-\omega_\mathrm{col} \text{ (mod 4)},
\end{equation}
where $\omega_\mathrm{col}$ is the signed count of crossings where both strands are of the specified color, which is well defined by Lemma~\ref{lem:writhedef}. For each of the three colors, $Q_{\mathrm{col}}$ obtains the same value, and the invariant $Q(\mathcal{L})$ is defined as the common value. In particular
\begin{align}
    -Q(\mathcal{L}) &= Q_\mathrm{red}(\mathcal{L}) + Q_\mathrm{gray}(\mathcal{L}) + Q_\mathrm{blue}(\mathcal{L}) \\
    &= \sum_{L_i \text{ is a loop of $\cal L$}} Q'(L_i)-\omega',
\end{align}
where $\omega'$ is the signed count of unicolor  crossings of $\cal L$. 

Consider the difference $\omega - \omega' \in \mathbb{Z}_4$, where $\omega$ is the signed count of all crossings of the diagram. It is the signed count of two-color crossings and therefore, by the definition of the linking number, 
\begin{align}
\omega - \omega' &= 2 (l_{rg} + l_{rb} + l_{gb}) 
\end{align}
where $l_{rg}$, $l_{rb}$, and $l_{gb}$ are the total linking numbers between red and gray, red and blue, and gray and blue loops respectively. By Ref.~\cite{annala:2022} 
$l(\mathcal{L}) \equiv l_{rg} \equiv l_{rb} \equiv l_{gb} (\mod 2),$
where the common value $l(\mathcal{L})$ is the \emph{linking invariant} of the diagram. In other words
\begin{equation}
    Q_c(\mathcal{L}) = - Q(\mathcal{L}) + 2l(\mathcal{L}) \in \mathbb{Z}_4.
\end{equation}
As $l(\mathcal{L}) \equiv Q(\mathcal{L}) \mod 2$~\cite{annala:2022}, it follows that 
\begin{align}
    - Q(\mathcal{L}) + 2l(\mathcal{L}) &= - Q(\mathcal{L}) + 2Q(\mathcal{L}) \\
    &= Q(\mathcal{L}) \in \mathbb{Z}_4,
\end{align}
proving the claim.
\end{proof}

\clearpage

\begin{figure}
    \centering
    \includegraphics[scale=1.3]{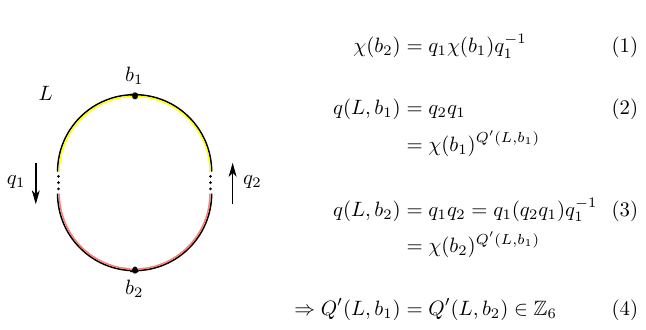}
    \caption{\textbf{Basepoint independence of $Q'$ in $G$-colored link diagrams.} To prove that $Q'(L_i)$ is independent of the basepoint, we consider two different basepoints, $b_1$ and $b_2$, on the loop $L$, and show that $Q'(L, b_1) = Q'(L, b_2)$. Equation~(3) in the figure follows from Eqs.~(1) and (2). It implies that $\chi(b_2)^{Q'(L,b_1)} = \chi(b_2)^{Q'(L,b_2)}$, from which the desired result follows. Even though we treat only the case of black loop in $T^*$-colored link diagram, the proof generalizes, mutatis mutandis, to any $G$-coloring under the assumption that $q_1 q_2$ is a power of $\chi(b_1)$.
    }\label{fig:basepoint}
\end{figure}

\begin{figure}
    \centering
    \includegraphics[scale=1]{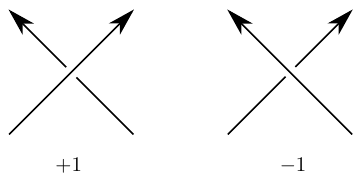}
    \caption{\textbf{Writhe.} Sign of an oriented crossing is determined by the right hand rule.}
    \label{fig:writhe}
\end{figure}

\begin{figure}
    \centering
    \includegraphics[scale=1]{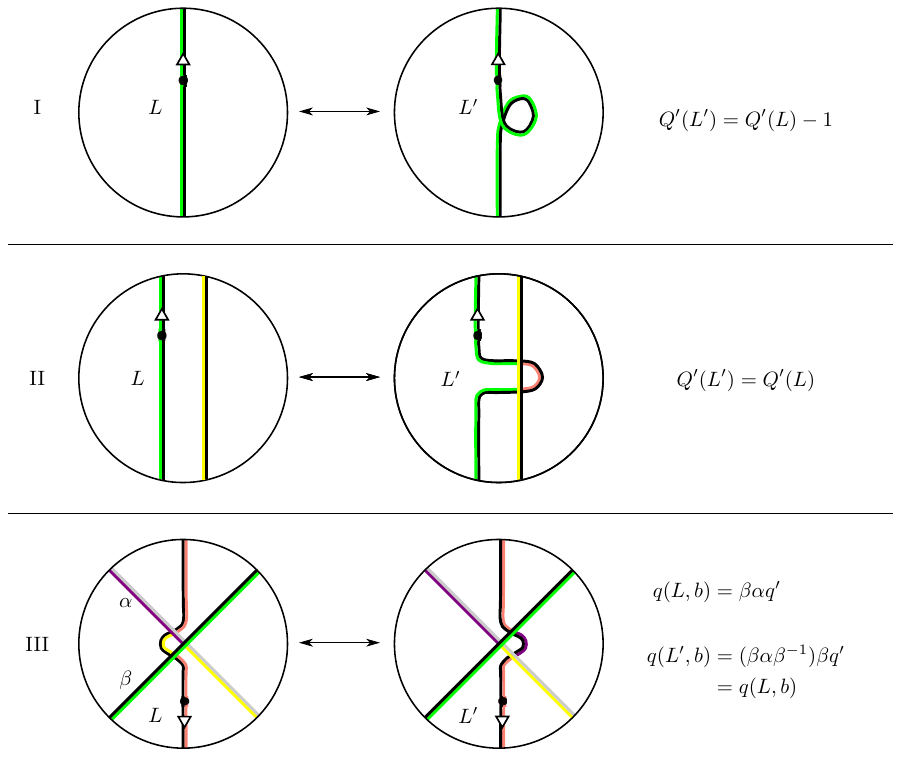}
    \caption{\textbf{Conservation of the $Q$-invariants under the Reidemeister moves.} 
    (\MakeUppercase{\romannumeral 1}) Twisting the loop $L$ in the diagram affects both $Q'(L)$ and the writhe term $\omega$. As these effects cancel each other, the the corresponding $Q$-invariant is conserved. 
    (\MakeUppercase{\romannumeral 2}) Moving the loop $L$ under another affects neither $Q'(L)$ nor the corresponding writhe term $\omega$ (nor $l_{bg}$), thus conserving the $Q$-invariant. (\MakeUppercase{\romannumeral 3}) Moving the loop $L$ under a crossing affects neither $Q'(L_i)$ nor the corresponding writhe term $\omega$ (nor $l_{bg}$), leaving the $Q$-invariant unchanged.
    }\label{fig:Reidemeister}
\end{figure}

\begin{figure}
    \centering
    \includegraphics[scale=1]{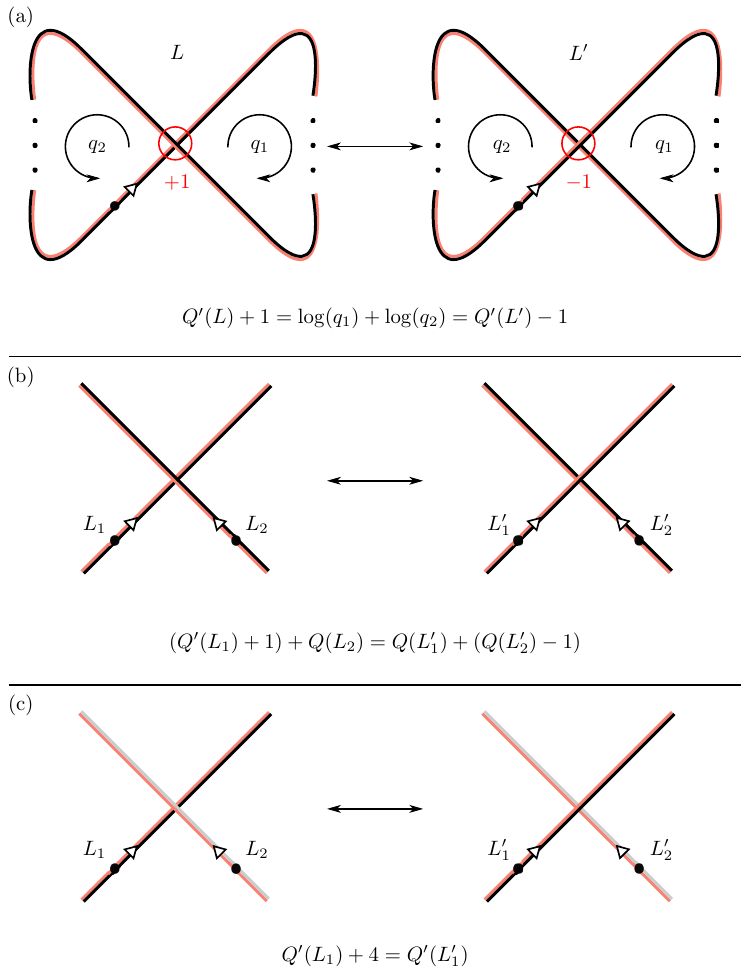}
    \caption{\textbf{Conservation of the invariant $Q_b$ in topologically allowed strand crossings.}
    A black strand may only cross a black strand of the same color or a gray strand of the same color.
    (a) There is essentially one case to consider when both strands of the crossing are part of the same loop $L$. The effects of the strand crossing on $Q'(L)$ and on the writhe term $\omega_b$ cancel each other. The logarithms refer to discrete logarithm with base $(1 - i + j + k) / 2$.
    (b),(c) There are essentially two cases to consider if the strands of the crossing belong to different loops $L_1$ and $L_2$, depending on the class of $L_2$. The effects of the strand crossing on $Q'(L_1)$ (and $Q'(L_2)$ if $L_2$ is a black loop) and on the term $4l_{bg} + \omega_b$ cancel each other.
    }\label{fig:crossing}
\end{figure}

\begin{figure}
    \centering
    \includegraphics[scale=1]{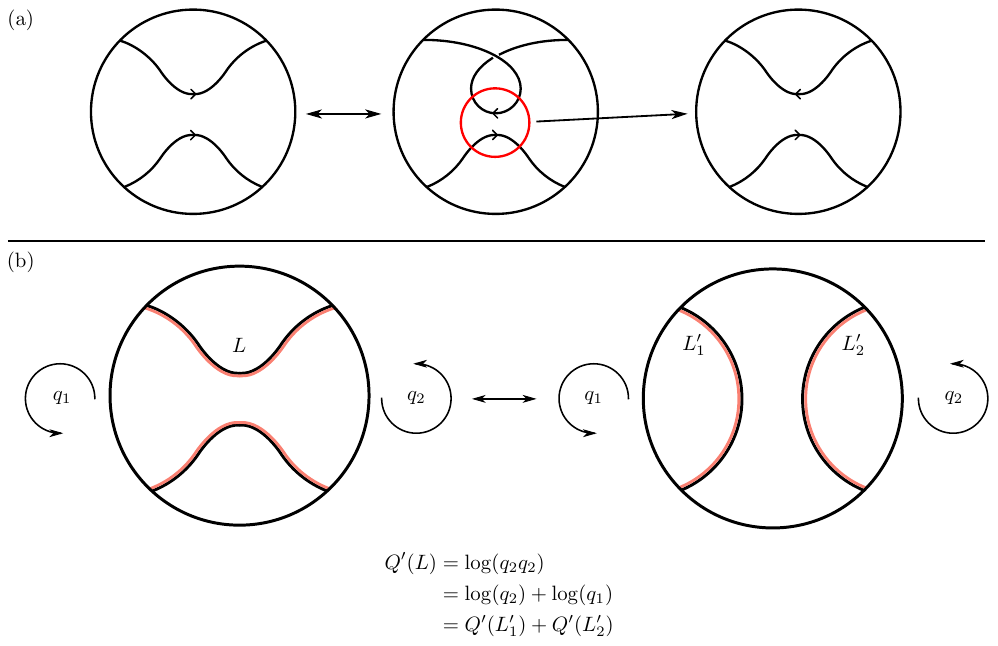}
    \caption{\textbf{Conservation of the invariant $Q_b$ in local reconnections.} 
    Since the orientation of a black loop is conserved in undercrossings, the reconnection always either splits one loop into two loops, which might be linked, although this is not apparent from the diagrammatic presentation, or, alternatively, amalgamates two distinct loops into one loop.
    (a) Up to the first Reidemeister move, we only have to consider the case when the connecting strands have opposite orientations. 
    (b) In the reconnection, the loop $L$ splits into loops $L'_1$ and $L'_2$ in a fashion that conserves the invariant $Q_b$. The logarithms refer to discrete logarithms with base $(1 - i + j + k)/2$
    }
    \label{fig:reconnection}
\end{figure}

\clearpage

\section{Classification}

Here, we classify all $T^*$ colored links that are tied from strands with a coloring in a single conjugacy class. We will focus on the conjugacy class corresponding to the \textit{black strands} (the top row in Fig.~\ref{fig:TColors}), since the gray strands (the middle row in Fig.~\ref{fig:TColors}) behave in an equivalent fashion, and $Q_8$-colored links were already classified in Ref.~\cite{annala:2022}. To simplify the process of classification, we employ several symmetries of the problem, such as planar symmetries of the diagram and color permutations induced by the automorphisms of the group $T^*$, which are depicted in Fig.~\ref{fig:perm1}.

\begin{thm}
Let $L$ be a $T^*$-colored link consisting of black strands. Then $L$ is either topologically trivial or an untangled disjoint union of $T^*$-colored trefoils and figure-eight knots.
\end{thm}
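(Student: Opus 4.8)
The plan is to prove the theorem by an explicit diagrammatic reduction, using the topologically allowed modifications to strip a black link down to a normal form. First I would translate the problem into purely combinatorial terms. Since conjugation preserves the conjugacy class, every arc of a black loop carries one of the four colors of the top row of Fig.~\ref{fig:TColors}, and the Wirtinger relation (Fig.~\ref{fig:TCrossing}) makes an over-passing black strand act on the under-passing one by conjugation, which on these four colors is a $3$-cycle fixing the rotation axis of the over-strand (well-defined on colors since $-1$ is central). Thus a black link is equivalent data to a link diagram whose arcs are colored by the four vertices of a tetrahedron, with the tetrahedral-rotation action at crossings. At the outset I would cut down the number of cases using the symmetries already introduced: the automorphisms of $T^*$ that permute the colors (Fig.~\ref{fig:perm1}) together with the planar symmetries of the diagram, while the black/gray exchange $\rho$ reduces the analogous gray statement to this one.

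The core of the argument is a reduction step driven by the two surgeries. Topologically allowed strand crossings let two strands of the \emph{same} color pass through one another, so I would first use them to remove every crossing in which the two arcs share a color; after this, crossings occur only between distinctly colored arcs. Reconnections, permitted between identically bicolored strands (Fig.~\ref{fig:modificationrules}), then let me split and merge loops. I would organize the whole simplification as an induction on the number of crossings: the claim is that any diagram which is not already an untangled union of trivial loops, trefoils, and figure-eights admits a sequence of moves strictly lowering the crossing count, or else splits off a standard summand. The relation noted in the main text, that a trefoil differs from a figure-eight by a black Solomon's link (Fig.~\ref{fig:examples}(c)), is what lets the two knot types serve interchangeably as the atoms of the normal form, so the induction need only produce \emph{some} such summand at each stage.

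To identify the summands I would analyze the color configuration forced around a minimal sub-tangle. Because the conjugation action is by $3$-cycles, a single black loop with essential self-crossings is constrained, after the cleanup above, to match the standard colorings of the trefoil or the figure-eight of Fig.~\ref{fig:examples}(a); a loop with no essential self-crossings is trivial and is discarded. A linked pair of black loops reduces, via reconnections, either to a connected loop (handled by the knot case) or to a black Solomon's link, which by the trefoil/figure-eight relation is absorbed into a knot summand. Throughout, invariance of $Q_b$ under all moves (Lemmas~\ref{lem:crossings} and~\ref{lem:reconnections}) serves only as a bookkeeping check, not as the engine of the proof, since $Q_b$ is not complete.

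The main obstacle I anticipate is precisely this last exhaustiveness-and-termination step: showing that the case analysis of local color configurations is complete and that the proposed moves always strictly reduce complexity, with no irreducible non-standard tangle surviving. The delicate point is the black Solomon's link, which carries trivial invariants yet whose own topological protection is unsettled (Fig.~\ref{fig:examples}(c)); I must show that whenever such a configuration appears it can be either unlinked or merged into a trefoil/figure-eight summand, so that it never obstructs the decomposition. Making this argument constructive and invariant-independent, rather than relying on matching $Q_b$, is where the real work lies.
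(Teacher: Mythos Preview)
Your proposal has a genuine structural gap: the ``induction on crossing number'' you describe is not a proof strategy but a restatement of the goal. You never exhibit a concrete move that is guaranteed to strictly reduce complexity, and the assertion that a single black loop with essential self-crossings ``is constrained\ldots to match the standard colorings of the trefoil or the figure-eight'' is unsubstantiated and, as stated, false: there are black colorings of arbitrarily complicated knot diagrams satisfying the Wirtinger relation, so no local color constraint forces the diagram itself to be a trefoil or figure-eight. Your preliminary cleanup (removing same-colored crossings via allowed strand crossings) is fine, but it does not bound the remaining diagram in any useful way, and your treatment of linked pairs (``reduces\ldots to a connected loop\ldots or to a black Solomon's link'') is an unproved dichotomy. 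You correctly identify termination as the obstacle, but the proposal contains no mechanism that would overcome it.

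The paper's argument is organized around an idea absent from your plan: rather than decreasing crossings, one first \emph{increases} them, replacing every crossing by a double crossing at the cost of splitting off a trefoil (Fig.~\ref{fig:cte}(a)). The diagram then becomes a collection of disjoint simple loops connected by ``edges'' (double crossings), and the entire remainder of the proof is a structured reduction on this loop-edge graph: elimination of two-edge loops (Fig.~\ref{fig:2loop}), removal of nesting via level-two loops and four-colored edge surgeries (Figs.~\ref{fig:edge}, \ref{fig:43}, \ref{fig:regions}), and finally a case analysis on the polyhedral regions of the resulting planar graph, handling regions bounded by two, three, and four edges (Figs.~\ref{fig:2elimination}--\ref{fig:4elimination3}). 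The well-founded quantity is the number of edges, not the number of crossings, and each step either reduces it or splits off a trefoil or figure-eight. This edge formalism is what makes the exhaustiveness and termination tractable; without it, your case analysis has no anchor.
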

\begin{proof}
It suffices to show that any $T^*$-colored link diagram may be decomposed into an untangled disjoint union of $T^*$-colored trefoil knots, figure-eight knots, and simple loops. By employing the sequence of local surgeries illustrated in Fig.~\ref{fig:cte}(a), each crossing of the diagram of $L$ may be transformed into a double crossing, which we may denote by a red edge as in Fig.~\ref{fig:cte}(b) to simplify the diagram. In this process, a trefoil knot is split off.
In other words, the link diagram of $L$ can be transformed into a disjoint union of simple loops, possibly nested, that are connected only by the edges as shown in Fig.~\ref{fig:cte}(c), up to splitting of trefoil knots.

For each \textit{innermost loop} of the diagram, i.e., a loop that does not contain any other loops, we may eliminate repeating bicolors by splitting the loop into multiple smaller ones. Because there are only four different types of black strands, we may transform the diagram into one, where each innermost loop is connected to at most four edges. Any two-edge loop can be eliminated, as depicted in Fig.~\ref{fig:2loop}, in a fashion that reduces the number of edges. In the process, some loops might amalgamate into large loops, connected to more than four edges, but such loops may again be split into smaller ones, possibly creating again two-edge loops. However, we may repeat the elimination-and-splitting process until there are only three- and four-edged innermost loops because two-edge loop elimination always reduces the number of edges. Hence we are left with only three- and four-edged innermost loops. 


The rest of the proof is separated into two steps.

\

\underline{Step 1.} Elimination of nesting.

\

If the loops of the diagram are nested in non-trivial fashion, then there must exist a \emph{level-two loop} $\alpha$, i.e., a loop inside which every loop is an innermost loop. We now want to transform each loop inside $\alpha$ to have an \textit{orientation} opposite that of $\alpha$, i.e., if the black color of the bicoloring of $\alpha$ is on the outside, then all the loops inside $\alpha$ should have it on the inside, and vice versa (Fig.~\ref{fig:regions}). This may be achieved by first splitting every four-edged loop with the wrong orientation into three- and two-edged loops as in Fig.~\ref{fig:43}(a), and then flipping the orientation of each three-edged loop with the wrong orientation as demonstrated in Fig.~\ref{fig:43}(b). After this, the two-edged loops may be eliminated using the process of Fig.~\ref{fig:2loop}, which might also affect the loop $\alpha$. Nonetheless, after the two-edge-loop elimination, the remnants of $\alpha$ only contain innermost loops with an orientation opposite that of $\alpha$. Each edge inside $\alpha$ is four-colored as illustrated in Fig.~\ref{fig:EdgeTypes}. By employing surgery operations depicted in Fig.~\ref{fig:edge}(b), we may eliminate the nesting, as depicted in Fig.~\ref{fig:regions}.

In such a fashion, we may eliminate all level-two loops, producing a diagram consisting of disjoint union of non-nested simple loops, connected by edges. We may reduce to the case where all loops are either three- or four-edged, and all of them have the same orientation. This implies that all edges of the diagram are four colored. This is the end of the first step.

\

\underline{Step 2.} Decomposition of a diagram without nesting.

\

The diagram divides the plane into several ``polyhedral regions'', the edges of which are the edges of the diagram, and the vertices of which are the loops (\emph{vertex loops}). If there exists a repeating color inside such a region, it may be broken apart, without altering the number of edges in the diagram. Hence, we may assume that there exists at least one region that is bounded by two, three, or four edges. Moreover, by pruning the vortex loops, we may assume that either all of them are connected to at most four edges, or that we can immediately reduce number of edges in the diagram by at least one (Fig.~\ref{fig:pruning}). It follows from the classification of four-colored edges (Fig.~\ref{fig:edgeclass}) that, up to even color permutations and mirroring of the plane, all bicolorings of such regions are equivalent. We will show that in the neighbourhood of such diagram, one can perform local surgery operations that reduce the number of edges in the diagram. 

There are three cases to consider:
\begin{enumerate}
    \item A region bounded by two edges can be eliminated (Fig.~\ref{fig:2elimination}), thus reducing the number of loops edges.

    \item The number of edges around a region bounded by three edges can be reduced (Fig.~\ref{fig:3elimination1} and Fig.\ref{fig:3elimination2}).

    \item The number of edges around a region bounded by four edges can be reduced (Fig.~\ref{fig:4elimination1}, Fig.~\ref{fig:4elimination2}, and Fig.\ref{fig:4elimination3})
\end{enumerate}

In each case, the number of edges in the diagram can be reduced by a process that possibly splits off $T^*$-colored trefoil and figure-eight knots. If the number of edges is 0, then the diagram describes an untangled disjoint union of loops. This proves that any $T^*$-colored link may be decomposed into an untangled disjoint union of $T^*$-colored trefoil knots, figure-eight knots, and simple loops, as desired. \qedhere



\end{proof}

\clearpage

\begin{figure}
    \centering
    \includegraphics[scale=1.3]{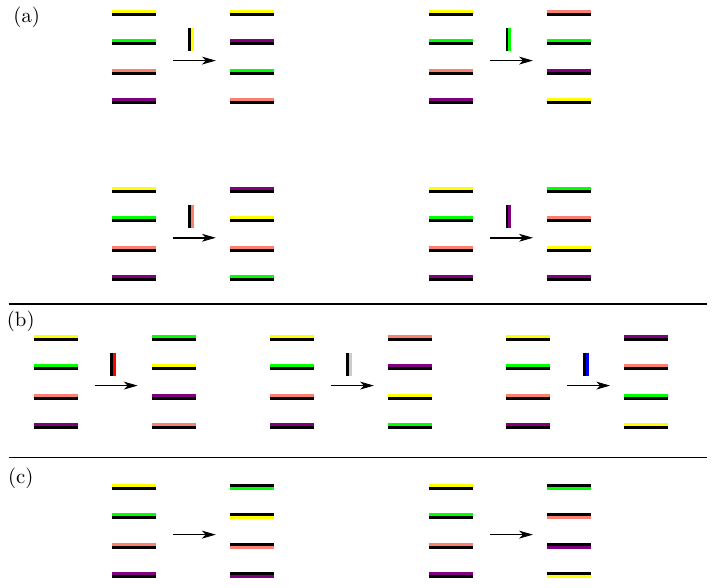}
    \caption{
    \textbf{Automorphisms of $T^*$ and the induced color permutations on black strands.} An automorphism of the binary tetrahedral group $T^*$ induces a permutation of bicolorings on a $T^*$-colored link diagram. This is useful in relating different bicolorings of a diagram to each other. The inner automorphism group of $T^*$ is isomorphic to the group $A_4$ of even permutations on four letters (the \emph{alternating group}). They are obtained by conjugating with either black strands (a) or $Q_8$ strands (b). Concretely, the permutation of colors corresponds to the permutation of vertices of the tetrahedron induced by the corresponding rotational tetrahedral symmetry. The full automorphism group of $T^*$ is isomorphic to the symmetric group $S_4$~\cite{golasiski:2011}. An odd permutation of colors flips the bicoloring, as illustrated by two examples (c). Concretely, the permutation of colors corresponds to the permutation of the vertices of the tetrahedron induced by the corresponding tetrahedral symmetry, which is not necessarily rotational. The flipping of the bicoloring is explained by the fact that an orientation reversing symmetry flips the handedness of the $120^\circ$ rotation that corresponds to the bicoloring.
    }
    \label{fig:perm1}
\end{figure}

\begin{figure}
    \centering
    \includegraphics[scale=0.9]{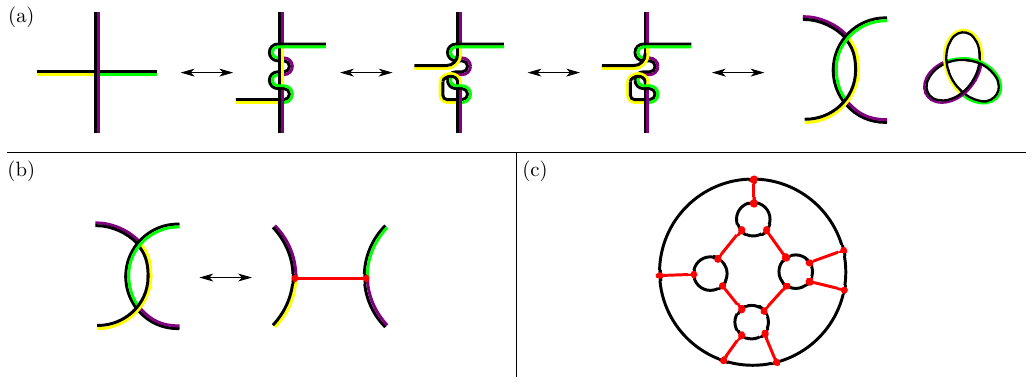}
    \caption{
    (a) By employing a sequence of local surgeries, a crossing can be transformed into a double crossing by splitting of a $T^*$-colored trefoil knot. Up to color permutations (Fig.~\ref{fig:perm1}) and planar symmetries, there is only one case to consider. Hence the result stands for all $T^*$-colored crossings. 
    (b) A double crossing can be depicted by an edge in order to simplify the presentation. 
    (c) By performing the local surgery in (a) for all crossings on the diagram of $L$, the diagram may be transformed into a set of untangled, possibly nested simple loops connected by edges. 
    }
    \label{fig:cte}
\end{figure}


\begin{figure}
    \centering
    \includegraphics[scale=1.1]{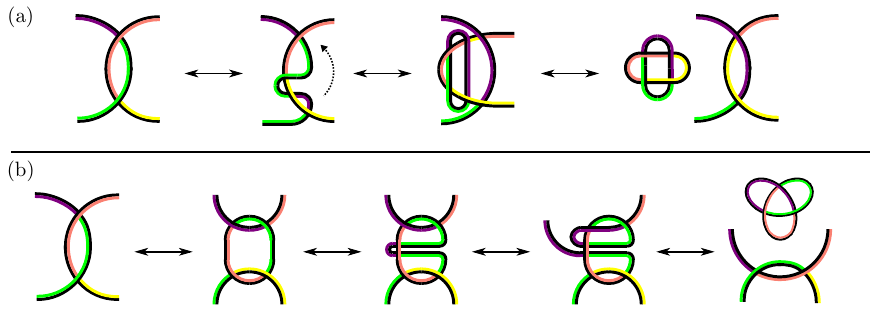}
    \caption{
    \textbf{Basic surgeries, I.}
    (a) The order of crossings may be swapped in a four-colored edge by a surgery that splits off a $T^*$-colored Solomon's link. Moreover, Solomon's link may be split into a disjoint union of a $T^*$-colored trefoil knot and a $T^*$-colored figure-eight knot as shown in Fig.~\ref{fig:examples}(c).
    (b) The orientation of a four-colored edge may be altered by a surgery that splits of a $T^*$-colored trefoil.
    }\label{fig:edge}
\end{figure}

\begin{figure}
    \centering
    \includegraphics[scale=1.25]{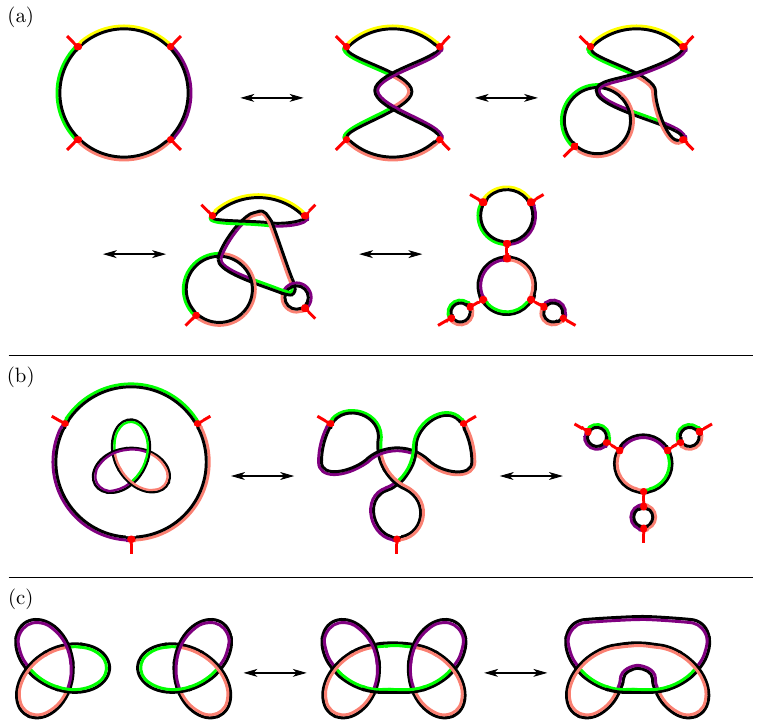}
    \caption{
    \textbf{Basic surgeries, II.}
    (a) Surgery splitting a four-edge loop into two three-edge loops and two two-edge loops. 
    (b)~Orientation of a three-edge loop may be flipped by employing a trefoil. One can always produce a pair of left- and right handed trefoils from topologically trivial configuration (c). Consequently, instead of regarding the orientation flip as annihilating a trefoil knot, one may regard it as splitting off the inverse trefoil.
    }
    \label{fig:43}
\end{figure}

\begin{figure}
    \centering
    \includegraphics[scale=1.1]{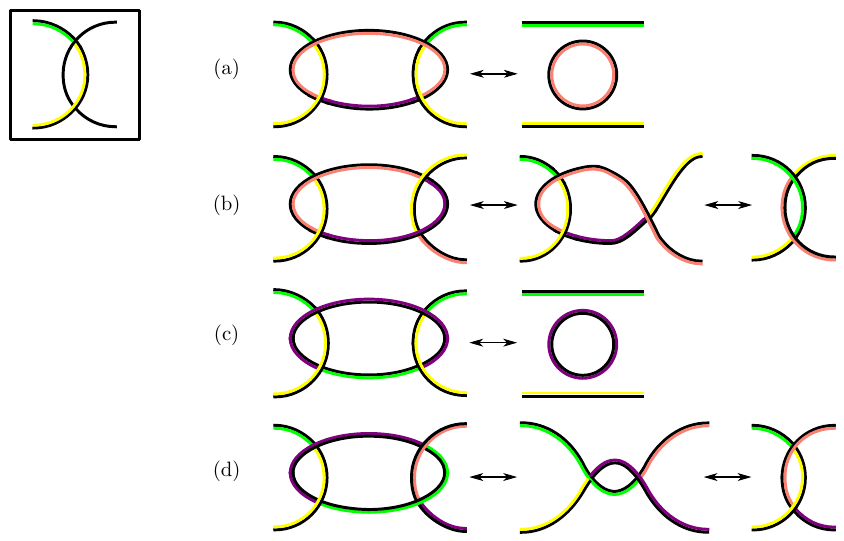}
    \caption{
    \textbf{Elimination of two-edge loops.} Up to even color permutations and planar symmetry, we may fix the colors of the leftmost loop, and the order of over- and undercrossings at the leftmost double crossing. This leaves four cases up to planar symmetry and the local surgery depicted in Fig.~\ref{fig:edge}(a): 
    (a) both edges are four colored (one case);
    (b) one edge is four-colored and the other is three-colored (one case);
    (c),(d) both edges are three-colored (two cases).
    In each case, the two-edge loop may be eliminated by employing the surgeries of Fig.~\ref{fig:cte}(a) and Fig.~\ref{fig:edge}(b). The $T^*$-colored trefoil and figure-eight knots that are split off in the surgeries are omitted above for simplicity.
    }\label{fig:2loop}
\end{figure}

\begin{figure}
    \centering
    \includegraphics[scale=2]{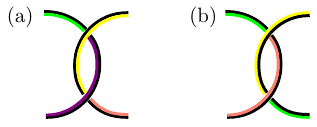}
    \caption{
        \textbf{Edge types.}
        The double crossings are classified according to the orientations of the crossing vortices. (a) If the strands have opposite orientation, then there are four colors adjacent to the double crossing (four-colored edge). (b) If the strands have the same orientation, then there are three colors adjacent to the double crossing (three-colored edge). In both cases, there is only one case to consider up to color permutations depicted in Fig.~\ref{fig:perm1}, and planar symmetries. 
    }
    \label{fig:EdgeTypes}
\end{figure}

\begin{figure}
    \centering
    \includegraphics[scale=1.2]{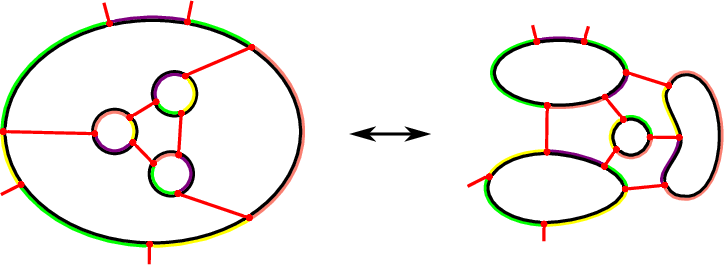}
    \caption{
    \textbf{Elimination of nesting.} 
    A level-two loop (a loop inside which no loop contains another loop) with an opposite orientation to that of the loops it contains can be locally decomposed into a non-nested diagram by repeatedly applying the local surgery of Fig.~\ref{fig:edge}(b). This removes a layer of nesting locally. 
    }\label{fig:regions}
\end{figure}


\begin{figure}
    \centering
    \includegraphics[scale=1.5]{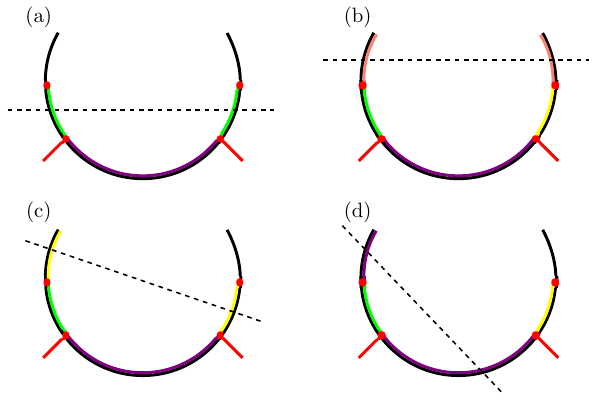}
    \caption{
    \textbf{Pruning vertex loops of a region.} If a vertex loop of a region has more than four edges, then a color is repeated inside it and it can be split into multiple pieces. There are essentially four cases to consider: in (a), (b), and (c) the loop splits in such a fashion that the new vertex loop will have at most four edges. The fourth case (d) produces a two-edge loop, which can be eliminated (Fig.~\ref{fig:2loop}), thus reducing the number of edges in the diagram. 
    }
    \label{fig:pruning}
\end{figure}

\begin{figure}
    \centering
    \includegraphics[scale=1.5]{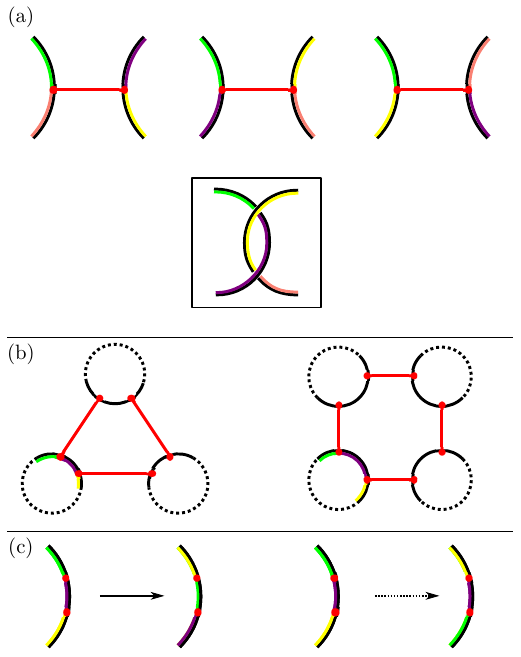}
    \caption{
    \textbf{Classification of four-colored edges and consequences.}
    (a) There are exactly three types of four-colored edges, which are equivalent to one another under the automorphisms of $T^*$ (Fig.~\ref{fig:perm1}). The panel displays the middle edge in detail.
    (b) The coloring on the bottom left loop extends to the whole diagram in such a way that each edge is four colored. The extension is unique for the left diagram, and for the right diagram it is unique under the assumption that no color is repeated inside the region.
    (c) All colorings without repeated colors on the bottom left loop are equivalent under even color permutations (solid arrow, Fig.~\ref{fig:perm1}) and planar mirrorings (dashed arrow).
    }
    \label{fig:edgeclass}
\end{figure}


\begin{figure}
    \centering
    \includegraphics[scale=2]{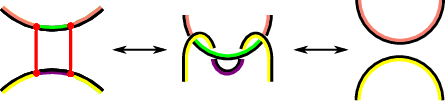}
    \caption{
    \textbf{Elimination of regions bounded by two edges.} Any region bounded by two edges is topologically trivial, as depicted above. Note that the order of the double crossings corresponding to the edges is arbitrary, up to splitting off a $T^*$-colored trefoil and a figure-eight knot (Fig.~\ref{fig:edge}).
    }
    \label{fig:2elimination}
\end{figure}

\begin{figure}
    \centering
    \includegraphics[scale=1.25]{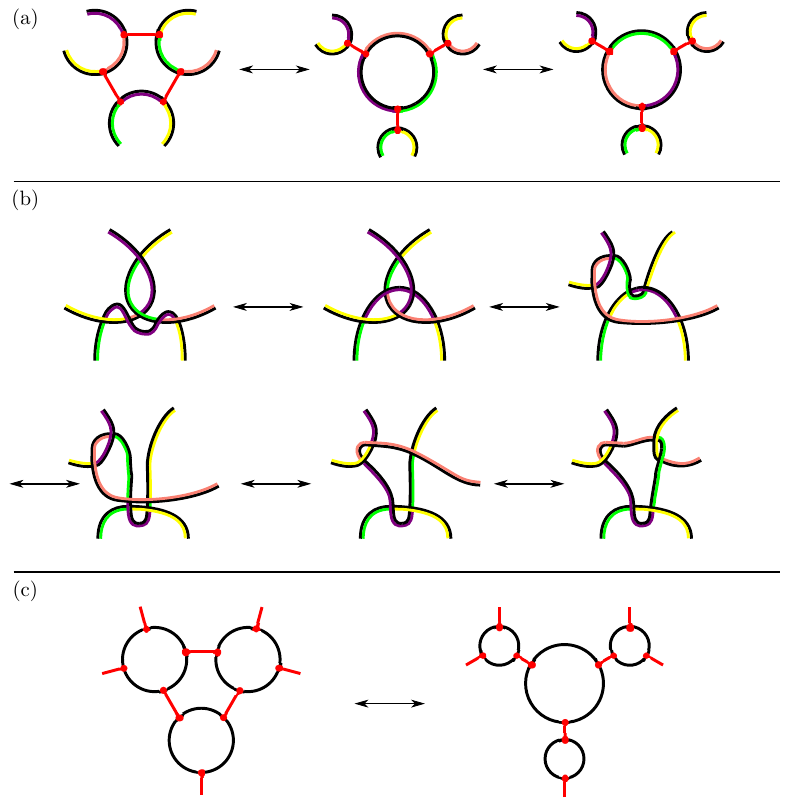}
    \caption{
    \textbf{Elimination of regions bounded by three edges, I.} 
    (a) Diagrammatic sketch of the result of surgeries on a region bounded by three edges. The first transformation is shown in detail in (b), and after eliminating the two-edge loops, the second transformation is obtained from an orientation flip as in Fig.~\ref{fig:43}(b).
    (c) If at least one of the vertex loops has only three edges, we may eliminate the resulting two-edge loop in order to reduce the number of edges in the diagram. 
    }
    \label{fig:3elimination1}
\end{figure}

\begin{figure}
    \centering
    \includegraphics[scale=1.25]{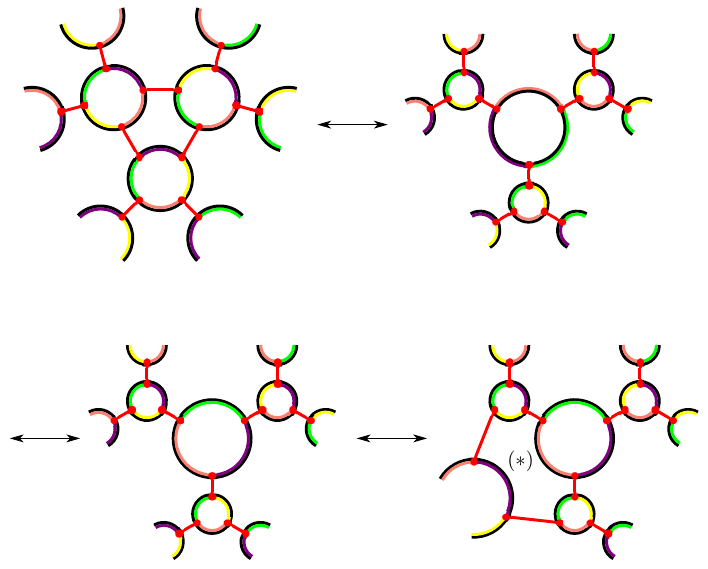}
    \caption{
    \textbf{Elimination of regions bounded by three edges, II.} 
    If all of the vertex loops of a region bounded by three edges have four edges, then we may first perform the surgery depicted in Fig.~\ref{fig:3elimination1}, flip the orientation of the middle loop according to Fig.~\ref{fig:43}(b), and eliminate the resulting two-edge loops, after which we may form a region bounded by four edges $(*)$. The edges around this region can be decreased by the way of Fig.~\ref{fig:4elimination2}(a), leading to a decrease in the number of edges in the diagram.
    }
    \label{fig:3elimination2}
\end{figure}

\begin{figure}
    \centering
    \includegraphics[scale=1]{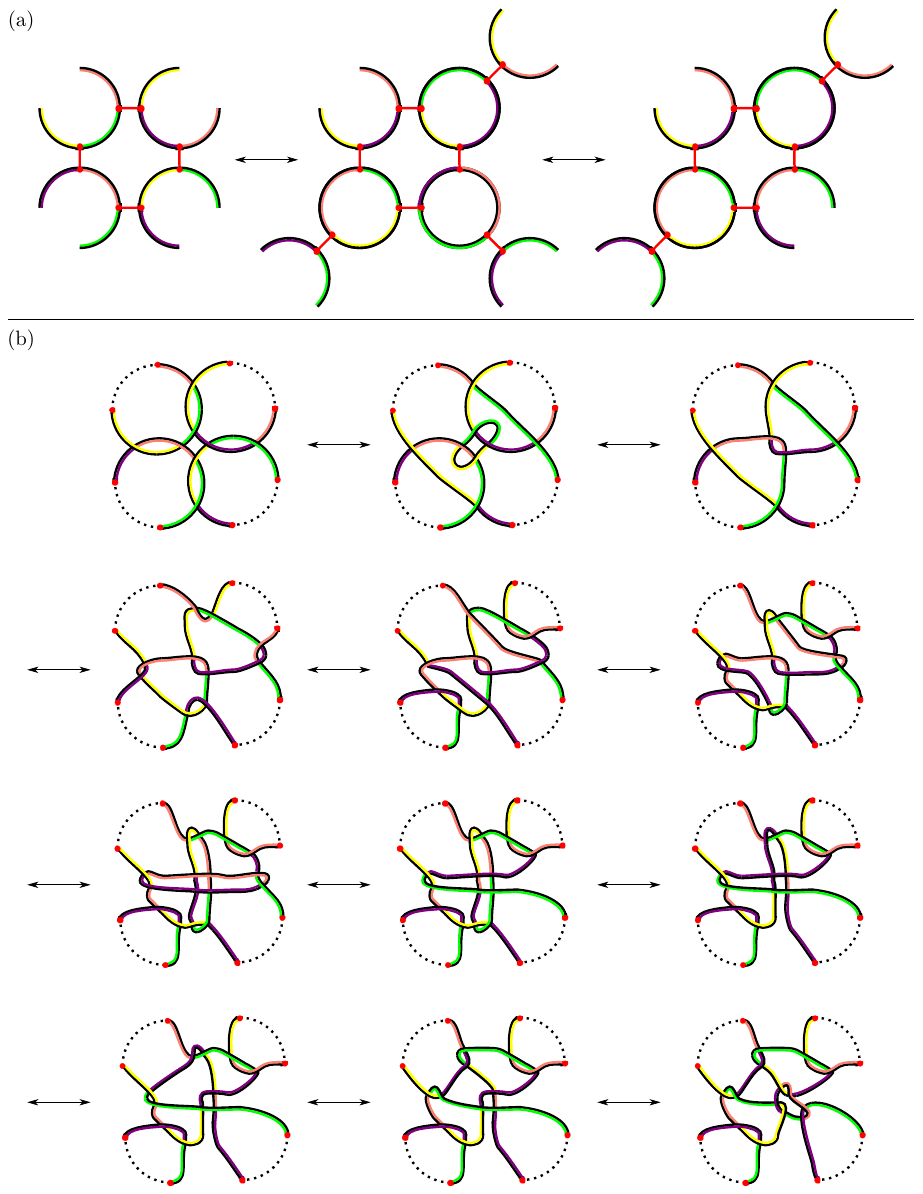}
    \caption{
    \textbf{Elimination of regions bounded by four edges, I.}
    (a) Diagrammatic sketch of the surgery performed inside a region bounded by four edges. The first transformation is depicted in detail in (b). The second transformation is obtained from an orientation flip illustrated in Fig.~\ref{fig:43}(b) after eliminating two-edge loops.
    }
    \label{fig:4elimination1}
\end{figure}

\begin{figure}
    \centering
    \includegraphics[scale=0.8]{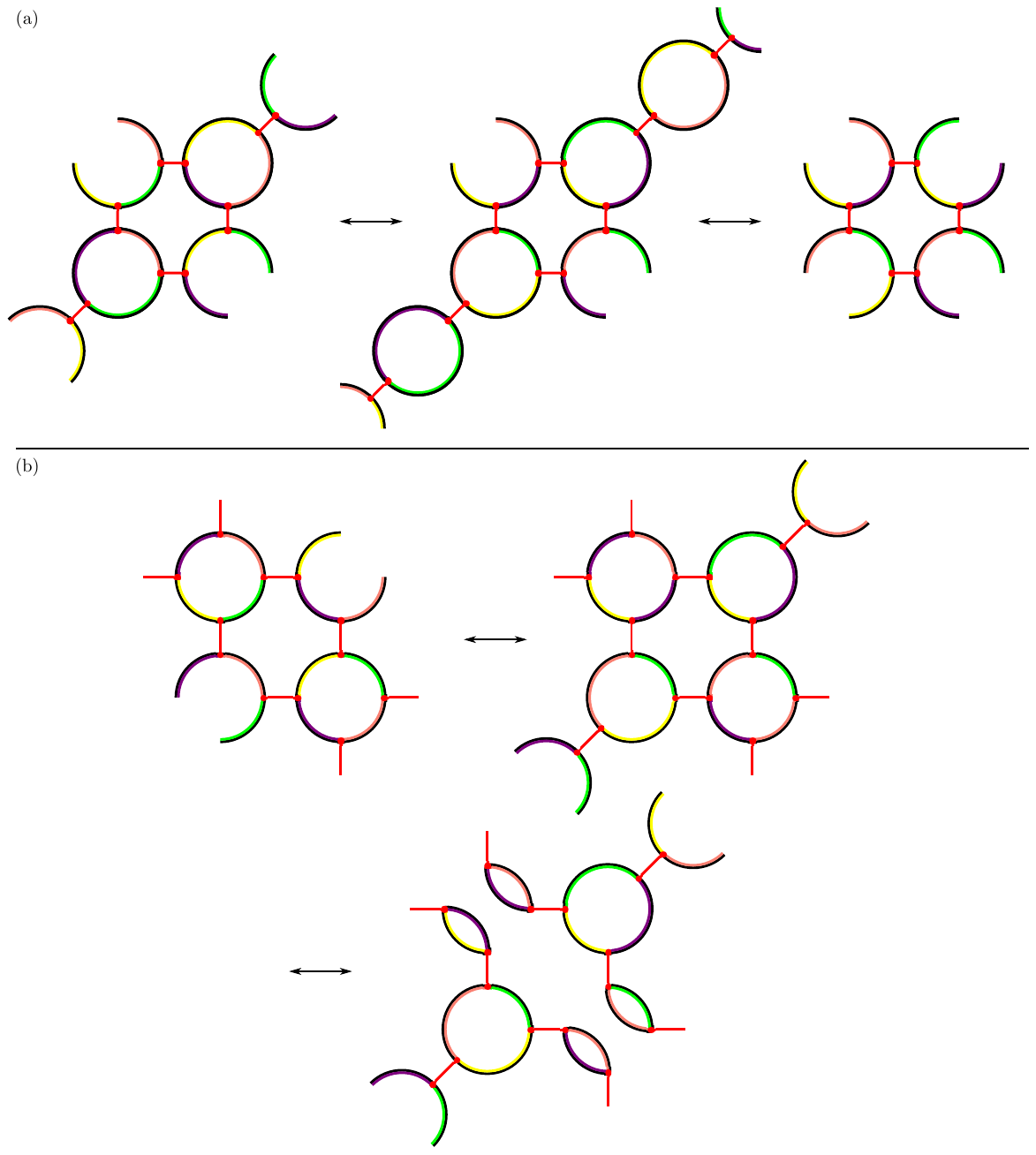}
    \caption{
    \textbf{Elimination of regions bounded by four edges, II.}
    (a) If there are three-edge vertex loops on the opposite sides of the region, the number of edges may be decreased via the procedure presented above.
    (b) If there are four-edge vertex loops on the opposite sides of the region, the number of edges may be decreased via the procedure presented above, and by eliminating the resulting two-edge loops. 
    }
    \label{fig:4elimination2}
\end{figure}

\begin{figure}
    \centering
    \includegraphics[scale=1.1]{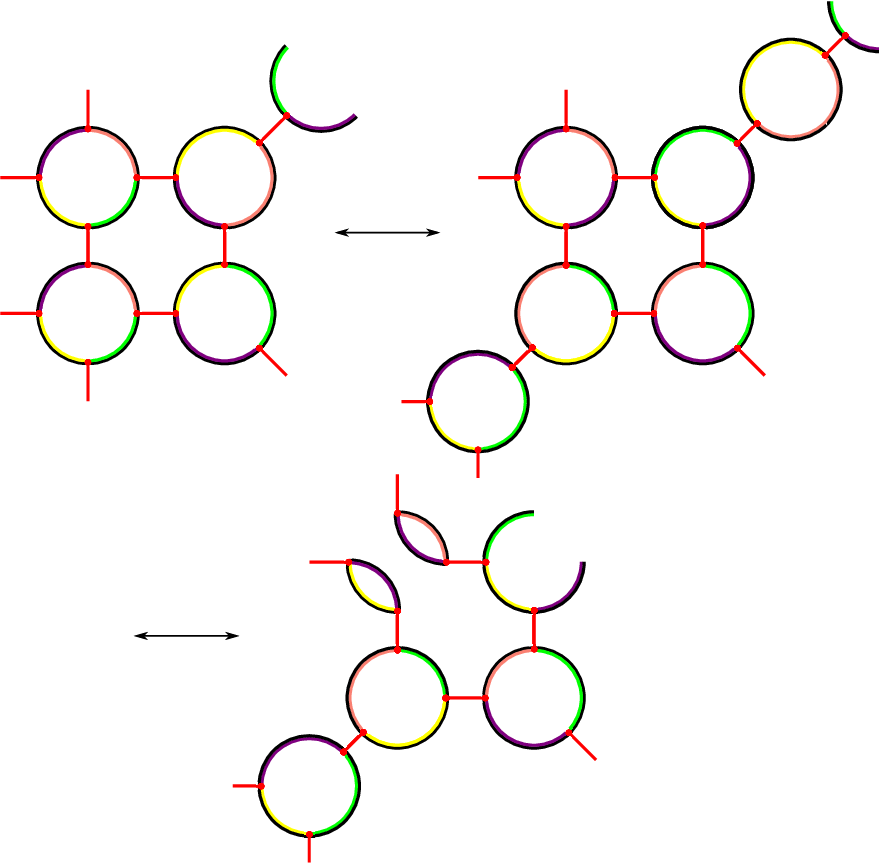}
    \caption{
    \textbf{Elimination of regions bounded by four edges, III.}
    If there does not exist a pair of three- or four-edge vertex loops on the opposite sides of the region, then the number of edges in the diagram can be decreased by the procedure presented above, and by eliminating the resulting two-edge loops.
    }
    \label{fig:4elimination3}
\end{figure}


\end{document}